\newtheorem{theorem}{Theorem}
\newtheorem{corollary}[theorem]{Corollary}
\newtheorem{example}[theorem]{Example}
\newtheorem{remark}[theorem]{Remark}
\newenvironment{proof}[1][Proof]{\textbf{#1.} }{\ \rule{0.5em}{0.5em}}
\title{\Large \textbf{Nonlocal Hamiltonian structures  of  the kinetic equation for soliton gas under polychromatic reductions }}
\author{Pierandrea Vergallo$^{1,2}$\\
\small $^{1}$Department of Engineering, University of Messina,\\
\small \texttt{pierandrea.vergallo@unime.it}\\
\small $^{2}${Istituto Nazionale di Fisica Nucleare, Sez.\ Milano}}
\begin{document}

\maketitle

\vspace{8mm}

\abstract{We deepen the existence of a nonlocal Hamiltonian formalism for the El's kinetic equation for soliton gas under the polychromatic reduction for a class of interaction kernels. The nonlocality presented is related to semi-Riemannian metrics of constant curvature, conformally flat metrics and hypersurfaces in a pseudo-Euclidean space. These results generalise a previous one that Vergallo and Ferapontov obtained with local Hamiltonian operators. Some examples as the Korteweg-de Vries, the Lieb-Liniger and the separable cases are analysed.  }

\vspace{20mm}

\tableofcontents

\section{Introduction}
The kinetic equation for soliton gas is an integro-differential equation that recently appeared in the context of the soliton gas theory \cite{CER,El,EK} and, independently, in what is known as generalised hydrodynamics \cite{Do,Do1,Do2} (for this reason it is also known as GHD equation). Both the theories deal with emerging phenomena in mathematical physics (making use of statistical mechanics, integrable systems and nonlinear waves) and experimental physics (due to their wide applications in the fields of optics, hydrodynamics and water waves). The main equations (also indicated by El's equation especially among the community of soliton gas theory) describe an infinite stochastic ensamble of interacting solitons with randomly distributed parameters. Moreover, in \cite{El}, the author proved that such an equation can be obtained as thermodynamic limit of the Whitham equations for the famous Korteweg-de Vries equation. This result was then generalised for a number of very well-known integrable systems (the sinh-Gordon equation, the Toda lattice, the defocusing nonlinear Schroedinger equation..), connecting directly the analytical and geometric approaches used in the theory of integrable systems to the emerging phenomena. This connection led researchers to investigate the geometric inner structure of the equations, focusing in particular on their Hamiltonian formulation. 

After the introduction of the delta-functional reduction (see \cite{EKPZ,PTE}), it has been proved that the integro-differential structure simply reduces to a quasilinear system of first-order Partial Differential Equations (PDEs in what follows). After that, its integrability was proved in \cite{FP}, mapping the system previously obtained into one of Jordan-block type. Recently, in \cite{VerFer1,VerFer2} the Hamiltonian structure of these reduced systems has been proved, using the Hamiltonian formalism for PDEs  with local differential operators, known as Dubrovin-Novikov operators. In \cite{VerFer2}, the authors also stated that a second Hamiltonian structure can be investigated, which is nonlocal and that deals with a compatible pair of operators. As proved by Magri \cite{magri}, such a bi-Hamiltonian structure leads to the integrability of the system. 

Finally, a challenging topic in the recent papers \cite{thib} and \cite{VerFer2} was the investigation of a Hamiltonian structure for the full  integro-differential kinetic equation, without any reduction. {In \cite{buld}, Bulchandani studied the case of nonlocal Hamiltonian structures for this equation. Even if the reference \cite{buld} is mostly devoted to quantum integrable systems, the structures obtained are very closed to the ones we investigate here and the paper represents an interesting discussion on this subject.   However, a systematic approach to Poisson structures for integro-differential systems (and consequently to kinetic equations in general) remains an open problem. We hope this paper can help in this direction starting from the investigations of nonlocalities for the polychromatic reduction of the system, i.e. under the delta-functional ansatz. }

{The paper is structured as follows: in Section \ref{2} we introduce the mathematical description of the equation, its polychromatic reduction and we briefly review the main concepts of (local) Hamiltonian formalism for PDEs. In Section \ref{sec0} we recall the structure of nonlocal operators and we prove the main Theorem of the paper, presenting some related Corollaries. Finally in Section \ref{sub33} we show some applications of the obtained structures for integrable systems. At the end, we discuss some further perspectives in the Conclusions in Section \ref{conc}.}

\section{Polychromatic reductions of the El's equation}\label{2}

 Let us consider the El's integro-differential kinetic equation, that  describes the evolution of a dense soliton gas and is also considered a generalisation of Zakharov's kinetic equation for rarefied soliton gas \cite{Z}. The equation concretely presents as a pair of equations, the first one is a hydrodynamic-type system whereas the second is an integral equation:
\begin{equation}\label{gas}
\begin{array}{c}
f_t+(sf)_x=0,\\
\ \\
{\displaystyle s(\eta)=S(\eta)+\mathop{\int} G(\mu, \eta)f(\mu)[s(\mu)-s(\eta)]\ d\mu,}
\end{array}
\end{equation}
where 
\begin{itemize}
    \item $f(\eta)=f(\eta, x, t)$ is the distribution function describing the density of the soliton gas (the so called density of states) such that {$f(\eta_0,x_0,t_0)\, d\eta \, dx$ is the number of solitons found at $t=t_0$ in the element $[\eta_0,\eta_0+d    \eta]\times [x_0,x_0+dx]$ of the phase space}. Note that we integrate over the support of $f$; \item $s(\eta)=s(\eta, x, t)$ is the associated transport velocity (also known as effective transport velocity);  
    \item $\eta$ is a spectral parameter in the Lax pair associated with the dispersive hydrodynamics; 
    \item $S(\eta)$ is the free soliton velocity and describes the velocity of the gas in the vacuum and \item $G(\mu, \eta)$ is the interaction kernel (also known as phase shift) modelling the pairwise soliton collisions.  We require $G(\mu, \eta)$  to be symmetric in its entries, i.e. $$G(\mu, \eta)=G(\eta, \mu).$$ 
\end{itemize}

The latter functions $S(\eta)$ and $G(\mu,\eta)$ are independent of $x$ and $t$.

We now consider the delta-functional ansatz for the distribution function $f(\eta,x,t)$:
\begin{equation}\label{del}
 f(\eta, x, t)=\sum_{i=1}^{n}u^i(x, t)\ \delta(\eta-\eta^i(x, t)),
\end{equation}
where $u^i(x,t)$ physically represent the propagation weights according to their transport velocities $s_i=s(\eta^i,x,t)$. By substituting \eqref{del} into \eqref{gas} the kinetic equation is reduced to a quasilinear system of $2n$ equations  
\begin{equation}\label{sysred1}
    \begin{cases}
        u^i_t=\partial_x(v^iu^i)\\
        \eta^i_t=v^i\eta^i_x
    \end{cases},
\end{equation}
in the $2n$ field variables $u^i(x,t)$ and $\eta^i(x,t)$, $i=1,2,\dots n$. This result was firstly derived by Pavlov, Taranov and El in 2012 in \cite{PTE} and the reduction is known as delta-functional ansatz or polychromatic reduction. Here the coefficients $v^i$ coincide up to a sign with the effective velocities of the weights $u^i$,  $v^i=-s(\eta^i,x,t)$, and can be recovered by the following system (linear in $v^j$):
\begin{equation}
    v^i(u,\eta)=-S(\eta^i)+\sum_{k\ne i}\epsilon^{ki}u^k(v^k-v^i), \quad  \text{with}\quad  \epsilon^{ki}={G(\eta^k, \eta^i)}, \ k\ne i.
\end{equation}

Let us now assume $\eta^i(x,t)$ are not constant, otherwise the last $n$ equations in \eqref{sysred1} are trivially satisfied, as investigated in \cite{EKPZ}. First, we recall the transformation of the field variables $(u,\eta)\mapsto (r,\eta)$, where
\begin{equation}
    r^i=-\frac{1}{u^i}\left(1+\sum_{k\ne i}\epsilon^{ki}u^k \right), \qquad i=1,2,\dots n,
\end{equation} firstly introduced by Ferapontov and Pavlov in \cite{FP}. Under the previous assumption on $\eta^i$, the system is mapped into 

\begin{equation}
\label{J1}
\begin{cases}
r^i_t=v^i r^i_x+p^i\eta^i_x  \\
\eta^i_t=v^i\eta^i_x  
\end{cases}
\end{equation}
whose coefficients can be explicitly expressed after the following definitions. Let us introduce a symmetric $n\times n$ matrix $\hat{\epsilon}$ whose entries are 
\begin{equation}
    \hat{\epsilon}^{ij}=\begin{cases}
        \epsilon^{ij}\qquad i\neq j\\
        r^i\qquad i=j 
    \end{cases}.
\end{equation}In addition, we set $\beta:=-\hat{\epsilon}^{-1}$, whose entries are indicated by $\beta_{ij}$. Finally, we obtain the following formulas
\begin{align}\begin{split}\label{uvp}
u^i&=\sum_{k=1}^n\beta_{ki}, \\  v^i&=\frac{1}{u^i}\sum_{k=1}^n \beta_{ki}\xi^k, \\
p^i&=\frac{1}{u^i}\left(\sum_{k=1}^n \epsilon^{ki}_{,\eta^i}(v^k-v^i)u^k+(\xi^i)' \right),
\end{split}\end{align}
where {$\xi^i(\eta^i)=-S(\eta^i) $ and }we use the notation $\epsilon^{ki}_{,\eta^i}$ to indicate partial derivative with respect to  $\eta^i$.

\begin{remark}
    The resulting system is  composed by {$n$ blocks of size $2\times 2$} in the diagonal. Each block is upper triangular, in Toeplitz form. Systems of this type are also known as parabolic hydrodynamic-type systems or systems of Jordan-block type. A recent investigation of parabolic system within the context of finding solutions via the method of differential constraints was done in \cite{RizVer1}. 
\end{remark}

\subsection{Hamiltonian structures of the reduced system}\label{3}
In this {subsection}, we describe the general framework of Hamiltonian formalism for PDEs. In particular, we focus on an evolutionary system of {\color{blue}$n$} equations
\begin{equation}\label{evsys}
    {u^i_t=f^i(x,\textbf{u},\dots,\textbf{u}_{kx}), \qquad i=1,2,\dots n, }
\end{equation}
where $t,x$ are the independent variables and $\textbf{u}=(u^1,\dots, u^n)$ are $n$ dependent (or field) variables depending on $t$ and $x$, here $u_{hx}$ stays for $\frac{\partial^h u}{\partial x^h}$. In this context, we say that the maximum order of derivation appearing in the system is the order of the system. 

The Hamiltonian formalism for evolutionary systems of PDEs \eqref{evsys} has become a well-studied topic in the last fifty years. Finding a Hamiltonian structure for a given system reveals additional properties  such as the existence of conserved quantities or a deeper geometric structure behind its analytic representation. The key role of the Poisson tensor and the Hamiltonian function for ODEs is covered by the Hamiltonian operator and the Hamiltonian functional respectively. We briefly recall that a matrix differential operator $\mathcal{A}^{ij}$, with $i,j=1,2,\dots n$ is Hamiltonian if it is skew-adjoint ($\mathcal{A}^*=-\mathcal{A}$) and its Schouten {bracket vanishes} ($[\mathcal{A},\mathcal{A}]=0$). For those who are more familiar with Poisson brackets, the previous requirements are equivalent to the fact that for any functionals $F=\int{f\, dx}$, {\color{blue} $G=\int{g\, dx}$} the bracket defined with respect to the operator $\mathcal{A}$
\begin{equation}
    \{F,G\}_{\mathcal{A}}=\displaystyle \int{\frac{\delta f}{\delta u^i}\mathcal{A}^{ij}\left(\frac{\delta g}{\delta u^j}\right)\, dx}
\end{equation}is a Poisson {bracket}, i.e. it is skew-symmetric ($\{F,G\}_{\mathcal{A}}=-\{G,F\}_{\mathcal{A}}$) and for any additional functional $H=\int{h\, dx}$ satisfies the Jacobi identity $$\{\{F,G\}_{\mathcal{A}},H\}_{\mathcal{A}}+\{\{G,H\}_{\mathcal{A}},F\}_{\mathcal{A}}+\{\{H,F\}_{\mathcal{A}},G\}_{\mathcal{A}}=0.$$
Finally, a Hamiltonian functional is a functional 
\begin{equation}
    H=\displaystyle \int{h(x,\textbf{u},\dots, \textbf{u}_{mx})\, dx},
\end{equation}
such that a system of the form \eqref{evsys} is written as 
\begin{equation}
    {u^i_t=\{u^i,H\}_{\mathcal{A}}, \qquad i=1,2,\dots n}
\end{equation} for a specific choice of a Hamiltonian operator $\mathcal{A}$. Note that the previous expression explicitly reads as
\begin{equation}
    {u^i_t=\mathcal{A}^{ij}\left(\frac{\delta H}{\delta u^j}\right), \qquad i=1,2,\dots n.}
\end{equation}

Several scalar equations and systems in physical phenomena have been proved to be Hamiltonian (e.g. the Korteweg-de Vries, the non-linear Schroedinger, the KP system, the Camassa-Holm, the sin-Gordon equation..). We refer to the survey \cite{mokhov98:_sympl_poiss} for further details and for a larger number of examples.

\paragraph{{Dubrovin-Novikov operators}}
As shown {at the beginning of this Section}, in this paper we focus on systems of PDEs which are quasilinear of first order and homogeneous, i.e. 
\begin{equation}\label{hts}
    u^i_t=V^i_j(u)u^j_x, \qquad i=1,2,\dots n.
\end{equation}
Systems of these type are also known in the literature as hydrodynamic-type systems \cite{Tsarev,Tsarev1}. 

In 1983, Dubrovin and Novikov proved that a natural {\color{blue}local} Hamiltonian structure exists for a large class of hydrodynamic-type systems. Indeed, one can assume the Hamiltonian density to  simply depend on the field variables $\textbf{u}$ and not on their higher order derivatives. Such functionals are also known as \emph{hydrodynamic functionals}, and to investigate a first-order system the only possibility is that the operator is a differential operator of degree 1. The most general structure for first-order operators which are homogeneous in the degree of derivation\footnote{Here we use the natural grading rules, i.e. $$\text{deg}(\partial_x^k)=k, \qquad \text{deg}(u_{hx})=h.$$} is the following
\begin{equation}\label{dn}
    g^{ij}(u)\partial_x+b^{ij}_k(u)u^k_x, \qquad i,j=1,2,\dots n.
\end{equation}
Here the coefficients $g^{ij},b^{ij}_k$ depend on the field variables only (so that their degree is zero). A central role in these operators is played by the leading coefficient $g^{ij}$, indeed in the non-degenerate case $\det(g)\neq 0$ the following result holds true:
\begin{theorem}[\cite{DN83}]\label{dnthm}
    A first-order homogeneous operator of form \eqref{dn} is a Hamiltonian operator if and only if setting $g_{ij}=(g^{kl})^{-1}$:
    \begin{itemize}
        \item $g_{ij}$ is a flat metric; 
        \item $b^{ij}_k$ and the Christoffel symbols for $g$ are related by the formula
        \begin{equation}
            b^{ij}_k=-g^{is}\Gamma^j_{sk}, \qquad i,j,k=1,2,\dots n.
        \end{equation}
    \end{itemize}
\end{theorem}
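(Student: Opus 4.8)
The plan is to prove the two implications separately, in each direction translating the two defining properties of a Hamiltonian operator---skew-adjointness $\mathcal{A}^\ast=-\mathcal{A}$ and the vanishing of the Schouten bracket $[\mathcal{A},\mathcal{A}]=0$---into pointwise tensorial identities for the coefficients $g^{ij}$ and $b^{ij}_k$. First I would compute the formal adjoint of $\mathcal{A}^{ij}=g^{ij}\partial_x+b^{ij}_k u^k_x$ by integration by parts and impose $\mathcal{A}^\ast=-\mathcal{A}$; this is a routine calculation and yields exactly the two conditions
\begin{equation*}
g^{ij}=g^{ji},\qquad \partial_k g^{ij}=b^{ij}_k+b^{ji}_k.
\end{equation*}
The first says that $g^{ij}$ is a symmetric tensor; since $\det g\ne 0$ by hypothesis, $g_{ij}:=(g^{kl})^{-1}$ is a well-defined (pseudo-Riemannian) metric. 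For the second, I would introduce the objects $\Gamma^j_{sk}:=-g_{si}b^{ij}_k$, equivalently $b^{ij}_k=-g^{is}\Gamma^j_{sk}$, and observe that lowering indices turns the condition into $\partial_k g_{ij}=\Gamma_{i,jk}+\Gamma_{j,ik}$ with $\Gamma_{i,jk}:=g_{is}\Gamma^s_{jk}$, i.e.\ into metric compatibility $\nabla g=0$ for the connection $\Gamma$. Thus skew-adjointness already forces $b$ to be produced by a metric-compatible (but not yet necessarily symmetric) connection, and the remaining content of the theorem lies entirely in the Jacobi identity.

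\textbf{Jacobi identity.} The core of the proof is expanding $[\mathcal{A},\mathcal{A}]=0$---equivalently, verifying the Jacobi identity for $\{\cdot,\cdot\}_{\mathcal{A}}$ on the coordinate functionals. After substituting $b^{ij}_k=-g^{is}\Gamma^j_{sk}$ and using metric compatibility from the previous step, the resulting expression becomes a polynomial in the jet variables $u^a_x,u^a_{xx},\dots$; collecting the coefficient of each independent differential monomial, I expect the identity to reduce to two conditions on $\Gamma$: torsion-freeness $\Gamma^i_{jk}=\Gamma^i_{kj}$ and vanishing of the curvature, $R^i_{jkl}=0$. Together with metric compatibility, torsion-freeness identifies $\Gamma$ with the Levi--Civita connection of $g$, so the $\Gamma^j_{sk}$ are genuinely the Christoffel symbols and $b^{ij}_k=-g^{is}\Gamma^j_{sk}$, while $R^i_{jkl}=0$ means that $g$ is a flat metric. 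This establishes the ``only if'' direction together with the stated formula.

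\textbf{Converse.} For the ``if'' direction, starting from a flat metric $g$ I would take $\Gamma$ to be its Levi--Civita connection and \emph{define} $b^{ij}_k:=-g^{is}\Gamma^j_{sk}$. Rather than redoing the Schouten-bracket computation, the cleanest argument is to pass to flat coordinates $\tilde u(u)$ in which $g^{ij}$ is constant and all Christoffel symbols vanish: there $\mathcal{A}$ reduces to the constant-coefficient operator $g^{ij}\partial_x$, which is obviously skew-adjoint and satisfies Jacobi. Since both conditions defining a Hamiltonian operator transform covariantly (tensorially) under a change of dependent variables $u\mapsto\tilde u(u)$, the operator is Hamiltonian in the original coordinates as well.

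\textbf{Main obstacle.} The genuinely laborious point is the Jacobi step: carrying out the Schouten-bracket computation in full and matching the coefficients of the jet monomials with the components of the torsion and of the Riemann tensor of $\Gamma$. I would streamline the bookkeeping by first writing the general coordinate expression of $[\mathcal{A},\mathcal{A}]$ for an arbitrary first-order operator and then evaluating it in normal coordinates at a point (where $\Gamma$ vanishes but its first derivatives, hence the curvature, do not), which makes the appearance of $R^i_{jkl}$ transparent; alternatively, for a self-contained treatment one may simply invoke the original computation of Dubrovin and Novikov \cite{DN83}.
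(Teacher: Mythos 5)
This statement is the classical Dubrovin--Novikov theorem, which the paper quotes with a citation to \cite{DN83} and does not prove, so there is no in-paper argument to compare against; your proposal is essentially a reconstruction of the original proof, and its overall route is correct: skew-adjointness of $g^{ij}\partial_x+b^{ij}_ku^k_x$ gives $g^{ij}=g^{ji}$ and $\partial_kg^{ij}=b^{ij}_k+b^{ji}_k$, the Jacobi identity then forces the connection $\Gamma^j_{sk}=-g_{si}b^{ij}_k$ to be torsion-free and flat, and the converse follows by passing to flat coordinates where the operator becomes $\eta^{ij}\partial_x$ with constant $\eta$, using the tensorial transformation law of such operators. Two caveats. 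First, the heart of the theorem is exactly the step you leave as ``I expect'': the expansion of $[\mathcal{A},\mathcal{A}]=0$ and the identification of the coefficients of the independent jet monomials with the torsion and curvature of $\Gamma$; as written, your text describes the computation rather than performing it, so the proposal is a correct plan but not yet a complete proof unless you either carry out that expansion or explicitly delegate it to \cite{DN83}, as you note at the end. Second, a small ordering point: the skew-adjointness condition $\partial_kg_{ij}=\Gamma_{i,jk}+\Gamma_{j,ik}$ coincides with metric compatibility $\nabla g=0$ only once the symmetry $\Gamma^i_{jk}=\Gamma^i_{kj}$ is available, which in your scheme comes later from the Jacobi identity; so you should phrase the first step as ``compatibility up to torsion'' (or impose the symmetrized statement only), otherwise the identification of $\Gamma$ with the Levi--Civita connection is asserted slightly too early. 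Neither point changes the conclusion, and the flat-coordinate argument for the ``if'' direction is the standard and valid shortcut.
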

First-order homogeneous Hamiltonian operators are also known as \emph{Dubrovin-Novikov operators}.  In what follows we assume $\det(g)\neq 0$ so that we substitute $b^{ij}_k$ more intuitively with $\Gamma^{ij}_k$. 

Now, a given hydrodynamic-type system \eqref{hts} admits a Hamiltonian structure with a Dubrovin-Novikov operator if a hydrodynamic functional $h(\textbf{u})$ exists such that 
\begin{equation}
    u^i_t=V^i_j(u)u^j_x=\left(\nabla^i\nabla_j h\right)u^j_x, \qquad i=1,2,\dots n,
\end{equation}
where $\nabla_j$ is the covariant derivative with respect to the Levi-Civita connection of $g_{ij}$ and $\nabla^i=g^{is}\nabla_j$.

A useful result to prove the Hamiltonianity of hydrodynamic-type systems was presented by Tsarev:

\begin{theorem}[\cite{Tsarev,Tsarev1}]\label{tsathm}
    A system of hydrodynamic type \eqref{hts} is Hamiltonian with a Dubrovin-Novikov structure if and only if the following relations are satisfied
    \begin{subequations}\begin{align}
        &g^{is}V^j_s=g^{js}V^i_s, \qquad i,j=1,2,\dots n,\label{cond1}\\
        &\nabla^iV^j_k=\nabla^jV^i_k, \qquad i,j,k=1,2,\dots n.\label{cond2}
    \end{align}
    \end{subequations}
\end{theorem}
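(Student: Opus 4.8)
The plan is to reduce the Hamiltonian condition to the existence of a single scalar potential $h$ and then to recognise \eqref{cond1}--\eqref{cond2} as precisely the integrability conditions for that potential. By Theorem \ref{dnthm}, a Dubrovin--Novikov operator is Hamiltonian exactly when its leading coefficient $g$ is a flat metric and $b^{ij}_k=-g^{is}\Gamma^j_{sk}$, so I would work throughout with such a flat $g$. Writing $H=\int h(\mathbf{u})\,dx$ for a hydrodynamic functional, so that $\delta H/\delta u^j=\partial_j h$, and expanding $u^i_t=\mathcal{A}^{ij}(\delta H/\delta u^j)$, the coefficient of $u^k_x$ yields
\[
V^i_k=g^{is}\bigl(\partial_s\partial_k h-\Gamma^j_{sk}\partial_j h\bigr)=\nabla^i\nabla_k h .
\]
The whole statement then becomes: such an $h$ exists if and only if \eqref{cond1} and \eqref{cond2} hold for $g$.

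For the forward implication I would assume $V^i_k=\nabla^i\nabla_k h$ and set $V_{jk}:=g_{ji}V^i_k=\nabla_j\nabla_k h$. The covariant Hessian is symmetric in $j,k$ because the Levi--Civita connection is torsion-free; raising indices recovers \eqref{cond1}. For \eqref{cond2} I would use that $\nabla^iV^j_k=\nabla^i\nabla^j\nabla_k h$ and that, $g$ being flat, its curvature tensor vanishes, so covariant derivatives commute when applied to the $1$-form $\nabla_k h$; hence $\nabla^i\nabla^j\nabla_k h=\nabla^j\nabla^i\nabla_k h$, which is \eqref{cond2}.

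For the converse, assuming \eqref{cond1}--\eqref{cond2}, the key idea is to exploit flatness to pass to affine coordinates in which $g_{ij}$ is constant and all Christoffel symbols vanish, so that $\nabla$ becomes $\partial$. There \eqref{cond1} reads $V_{jk}=V_{kj}$ and \eqref{cond2} reads $\partial_l V_{jk}=\partial_j V_{lk}$. The latter says that for each fixed $k$ the $1$-form with components $V_{jk}$ is closed, so the Poincar\'e lemma provides functions $w_k$ with $V_{jk}=\partial_j w_k$; then $V_{jk}=V_{kj}$ forces $\partial_j w_k=\partial_k w_j$, i.e. $w=w_k\,dx^k$ is itself closed, whence $w_k=\partial_k h$ locally. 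Therefore $V_{jk}=\partial_j\partial_k h=\nabla_j\nabla_k h$ in these coordinates, and being a tensorial identity this holds in all coordinates, giving $V^i_k=\nabla^i\nabla_k h$ and hence, by the first step, the desired Hamiltonian representation.

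Most of this is routine bookkeeping once the geometry is fixed; I expect the only delicate point to be the passage to flat coordinates together with keeping track of which index of $V_{jk}$ plays the role of the $1$-form variable in the two applications of the Poincar\'e lemma — conflating them would wrongly suggest that \eqref{cond2} alone suffices. Everything is local, so the topological hypotheses of the Poincar\'e lemma are harmless.
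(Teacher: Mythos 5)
Your argument is correct, but note that the paper does not prove this statement at all: Theorem \ref{tsathm} is quoted from Tsarev's works \cite{Tsarev,Tsarev1}, so there is no internal proof to compare against; what you have reconstructed is essentially the classical argument behind the cited result. Your reduction of ``Hamiltonian with a Dubrovin--Novikov operator'' to the existence of a hydrodynamic density $h$ with $V^i_k=\nabla^i\nabla_k h$ is the right normal form (and the computation of the coefficient of $u^k_x$ is correct), the forward direction via symmetry of the covariant Hessian and commutation of covariant derivatives for the flat metric is sound, and the converse via flat coordinates plus a double application of the Poincar\'e lemma (first in the free lower index of $V_{jk}$ using \eqref{cond2}, then in $k$ using \eqref{cond1}) is exactly how one produces the potential $h$; your closing remark about not conflating the two applications is well taken, since \eqref{cond2} alone only gives the $w_k$'s and not a single $h$. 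Two minor points you should make explicit: (i) the restriction to hydrodynamic Hamiltonians $H=\int h(\mathbf{u})\,dx$ is not an extra hypothesis but follows from degree counting (a first-order homogeneous operator applied to a variational derivative of a density depending on $x$-derivatives would produce terms of order higher than one, which cannot match the quasilinear system \eqref{hts}), a point the paper itself only states informally; (ii) the statement, like Tsarev's original one, is local --- the flat coordinates and the Poincar\'e lemma give $h$ only on simply connected charts, and the theorem should be read with the non-degeneracy assumption $\det g\neq 0$ and a fixed flat metric $g$ (the ``only if'' direction is relative to the metric entering \eqref{cond1}--\eqref{cond2}), which is how you have in fact used it.
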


\vspace{3mm}

{After presenting the general framework in which we operate, we can now focus on the investigated equation under polychromatic reduction 
\begin{equation}
\label{J2}
\begin{cases}
r^i_t=v^i r^i_x+p^i\eta^i_x  \\
\eta^i_t=v^i\eta^i_x  
\end{cases},
\end{equation}
requiring the Hamiltonian property in Dubrovin-Novikov sense. As a consequence of Tsarev's Theorem, we obtain a specific structure for the metric $g^{ij}$ and some necessary constraints on the interaction kernel. 

In particular, in  \cite{VerFer1,VerFer2} the authors   firstly proved that by conditions \eqref{cond1} the leading coefficient $g^{ij}$ has a block-diagonal form 
$$g^{ij}=\begin{pmatrix}
    J_1&0&\cdots &0\\
    0&J_2& \cdots &0\\
    0&0&\ddots&0\\
    0&0&\dots& J_n
\end{pmatrix}, \quad \text{where} \qquad J_i=\begin{pmatrix}
        m_i&n_i\\n_i&0
    \end{pmatrix}, \qquad i=1,2,\dots n.$$ We remark that each one of the $n$ blocks has a Hankel structure. 
    
Finally, applying conditions \eqref{cond2} we are also able to specialise the entries of $g^{ij}$:
\begin{equation}\label{nm}
n_i=\frac{s_i(\eta^i)}{(u^i)^2}, \quad m_i=-\frac{2s_i(\eta^i)}{(u^i)^3}\displaystyle \sum_{j\neq i}^n u^j\epsilon^{ji}_{,\eta^i} +\frac{g_i(r^i, \eta^i)}{(u^i)^2},
\end{equation}
where $u^i$ are defined in \eqref{uvp} and the functions  $s_i(\eta^i)$ and $g_i(r^i, \eta^i)$ are arbitrary.

The missing requirement to obtain the Hamiltonian structure is finally given by the flatness of the leading coefficient:

\begin{theorem}[\cite{VerFer2}]\label{main} The metric specified by (\ref{nm}) is flat if and only if the functions $g_i(r^i, \eta^i)$ are quadratic in $r^i$,
$$
g_i(r^i,\eta^i)=\varphi_i(\eta^i)(r^i)^2+\chi_i(\eta^i)r^i+\psi_i(\eta^i),
$$ 
furthermore, the following conditions must be satisfied:
\begin{subequations}
\begin{align}
&\epsilon^{ij}\left(\chi_i+\chi_j\right)=2\left(s_i\epsilon^{ij}_{,\eta^i}+s_j\epsilon^{ij}_{,\eta^j}+\displaystyle \sum_{k\neq i,j}\epsilon^{ik}\epsilon^{jk}\varphi_k\right), \label{eqdr}
\\
&\displaystyle \sum_{k\neq i} \varphi_{k}(\epsilon^{ik})^2+\psi_i=0. \label{eqref2}
\end{align}
\end{subequations}
\end{theorem}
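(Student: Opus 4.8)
The plan is to compute the Riemann tensor of the covariant metric $g_{ij}$ obtained by inverting $g^{ij}$ and to read off exactly when it vanishes. I would work in the coordinates $(r^1,\eta^1,\dots,r^n,\eta^n)$, in which $g^{ij}$ is block diagonal with $2\times 2$ Hankel blocks $J_i$; inverting each block gives covariant blocks $\begin{pmatrix} 0 & 1/n_i \\ 1/n_i & -m_i/n_i^2\end{pmatrix}$ with $n_i,m_i$ as in \eqref{nm}. The one computational input I would set up first is the differentiation calculus coming from $\beta=-\hat\epsilon^{-1}$: since $\hat\epsilon$ depends on $r^j$ only through its $(j,j)$ entry and on $\eta^j$ only through the $j$-th row and column, one gets the closed formulas $\partial_{r^j}\beta_{ab}=\beta_{aj}\beta_{jb}$, $\partial_{r^j}u^i=u^j\beta_{ji}$, together with the analogous $\eta^j$-derivatives built from the $\epsilon^{jb}_{,\eta^j}$. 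These turn every derivative of $n_i$ and $m_i$ into an explicit rational expression in $\beta,u,\epsilon$ and the derivatives of the free functions $s_i,g_i$.

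Next I would compute the Christoffel symbols, using the block structure to discard the many vanishing ones: because $g^{ij}$ is block diagonal, a symbol $\Gamma^k_{ij}$ with $k$ in the $c$-th block can be nonzero only if at least two of $i,j,k$ lie in one common block, and the genuinely "off-diagonal" ones reduce to $-\tfrac12 g^{kl}\partial_l g_{ij}$. I would then expand $R^{l}_{ijk}=\partial_i\Gamma^{l}_{jk}-\partial_j\Gamma^{l}_{ik}+\Gamma^{l}_{im}\Gamma^{m}_{jk}-\Gamma^{l}_{jm}\Gamma^{m}_{ik}$ and sort the vanishing conditions according to how the four indices are distributed among the blocks: (a) all four in one block; (b) split between two blocks; (c) spread over three or more blocks, with the internal summation index $m$ ranging over the remaining blocks, which is precisely the mechanism that produces sums like $\sum_{k\neq i}$ or $\sum_{k\neq i,j}$ in the final answer.

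From here the three stated conditions should come out as follows. In case (a) the $i$-th block contributes a single independent curvature component; after clearing denominators it is a polynomial in $r^i$ of low degree, and requiring it to vanish identically (the remaining data, in particular the $\eta$'s and the kernel, being free) forces $\partial_{r^i}^3 g_i=0$, i.e. $g_i(r^i,\eta^i)=\varphi_i(\eta^i)(r^i)^2+\chi_i(\eta^i)r^i+\psi_i(\eta^i)$. Substituting this quadratic form back, case (b) collapses to the symmetric relation \eqref{eqdr} and case (c) to \eqref{eqref2}. For the converse I would assume $g_i$ quadratic together with \eqref{eqdr}--\eqref{eqref2} and verify directly that every curvature component, in each of the three index patterns, vanishes, which gives the "if" direction.

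The main obstacle is the bookkeeping. Although $g^{ij}$ is block diagonal, each $m_i$ depends on \emph{all} the field variables through $\beta$ and $u$, so a priori even the within-one-block curvature components involve the whole matrix $\beta$; the reason the computation closes is the interplay of the Hankel (degeneracy) shape of the blocks with the differentiation identities above, which force large families of terms to cancel. The delicate points are: (i) showing that case (a) yields \emph{only} the "$g_i$ is quadratic" condition, with no extra constraint on $\varphi_i,\chi_i,\psi_i$ separately; (ii) checking that cases (b) and (c) reduce to exactly \eqref{eqdr} and \eqref{eqref2} and to nothing further; and (iii) the sufficiency direction, where the cancellations must be re-examined once the quadratic ansatz and the kernel constraints are imposed.
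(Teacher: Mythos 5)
Your computational setup is sound --- the inversion of the Hankel blocks and the differentiation identities $\partial_{r^j}\beta_{ab}=\beta_{aj}\beta_{jb}$, $\partial_{r^j}u^i=u^j\beta_{ji}$ are exactly the calculus needed --- and the general strategy (write the Christoffel symbols explicitly and demand that the curvature vanish) is the same one behind \cite{VerFer2} and behind the proof given in this paper for the constant-curvature analogue (Theorem \ref{main2}). But there is a concrete misstep in how you expect the three conditions to be distributed over your index patterns, and it is precisely your ``delicate point (i)''. Because $m_i$ depends on \emph{all} the field variables through $u^i$ and the cofactors of $\hat{\epsilon}$, the single within-block component $R^{r^i}_{r^ir^i\eta^i}$, once cleared of the denominator $2s_i(\det\hat{\epsilon})^2$, is a polynomial in \emph{all} the $r^k$, not just in $r^i$. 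In the actual computation this one family of components already yields everything: independence of $r^i$ gives the quadraticity of $g_i$, the coefficient of $(r^j)\prod_{k\neq i,j}(r^k)^2$ gives \eqref{eqdr}, and the coefficient of $\prod_{k\neq i}(r^k)^2$ gives \eqref{eqref2}. So your case (a) does not produce ``only quadraticity, with no extra constraint''; and conversely \eqref{eqref2}, which carries a single free block label $i$, cannot arise from components spread over three blocks as you propose, since those carry at least three free labels. The case classification would therefore not close in the way you describe, even though requiring the full tensor to vanish must of course reproduce the same final list.

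The second gap is that what you submit is a plan: the decisive computations are deferred to ``bookkeeping'', and the part your plan leaves entirely open is exactly what the single-component argument does not give, namely sufficiency --- that once $g_i$ is quadratic and \eqref{eqdr}, \eqref{eqref2} hold, all the remaining (genuinely cross-block) curvature components vanish with no further constraints. Note also that the in-text proof in this paper (of Theorem \ref{main2}) only establishes necessity, and the ``if and only if'' of Theorem \ref{main} is the content of \cite{VerFer2}; reproducing it requires actually carrying out the cancellations you postpone in points (ii)--(iii). As it stands, the proposal performs neither the computation that produces \eqref{eqdr}--\eqref{eqref2} nor the one showing that nothing else is needed.
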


This result shows that a Hamiltonian structure for the reduced systems with Dubrovin-Novikov operators is possible only under additional constraints on the interaction kernel $\epsilon$. In spite of the large number of examples that this assumption covers, for a number of weights $u^i$ larger than $n=2$, only a single Hamiltonian structure exists in the investigated cases (see Section \ref{sub33}). In particular, it seems that the bi-Hamiltonian  structure obtained with local operators in \cite{VerFer1} is lost for $n>2$.

However, as firstly stated in \cite{VerFer2}, the bi-Hamiltonian nature of the equations is preserved if we allow the operators to  be nonlocal.}

\section{Nonlocal Hamiltonian structures of the reduced systems}\label{sec0}
{As briefly shown in the previous Section,} we first remark that Tsarev's conditions \eqref{cond1} and \eqref{cond2} do not imply that a first-order operator \eqref{dn} is Hamiltonian. In particular, this result is not related to the curvature tensor of $g$. Indeed, the same conditions hold for a generalised version of Dubrovin-Novikov operators, which are nonlocal. 

Let us introduce the operator 
\begin{equation}\label{fm}
    g^{ij}(u)\partial_x+b^{ij}_k(u)+c\, u^i_x\,\partial_x^{-1}\,u^j_x, \qquad i,j=1,2,\dots n,
\end{equation}
where $c$ is a constant. These operators were firstly introduced in \cite{MF} by Ferapontov and Mokhov, who also proved the following result on their Hamiltonianity
\begin{theorem}[\cite{MF}]\label{fmthm}
    A nonlocal first-order operator of form \eqref{fm} is Hamiltonian if and only if setting $g_{ij}=(g^{kl})^{-1}$:
    \begin{itemize}
        \item $g_{ij}$ has constant curvature $c$; 
        \item $b^{ij}_k$ and the Christoffel symbols for $g$ are related by the formula\begin{equation}b^{ij}_k=-g^{is}\Gamma^j_{sk}, \qquad i,j,k=1,2,\dots n.
        \end{equation}
    \end{itemize}
\end{theorem}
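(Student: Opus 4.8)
The plan is to verify directly the two defining properties of a Hamiltonian operator for \eqref{fm} --- skew-adjointness, $\mathcal{A}^{*}=-\mathcal{A}$, and the vanishing of the Schouten bracket, $[\mathcal{A},\mathcal{A}]=0$ --- and to translate each of them into tensorial identities for the coefficients $g^{ij}$ and $b^{ij}_k$.

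First I would dispose of skew-adjointness. Because $(\partial_x^{-1})^{*}=-\partial_x^{-1}$ and operators of multiplication by a function are self-adjoint, the nonlocal tail obeys $\bigl(c\,u^i_x\,\partial_x^{-1}\,u^j_x\bigr)^{*}=-c\,u^j_x\,\partial_x^{-1}\,u^i_x$, so it is automatically skew under the simultaneous exchange $i\leftrightarrow j$ and formal adjoint. Hence $\mathcal{A}^{*}=-\mathcal{A}$ reduces to exactly the same two conditions as in the local case, $g^{ij}=g^{ji}$ and $\partial_k g^{ij}=b^{ij}_k+b^{ji}_k$. Under the non-degeneracy assumption this says that $g_{ij}=(g^{kl})^{-1}$ is a genuine pseudo-Riemannian metric and that the symmetric part of $b$ is prescribed by $g$; introducing the connection $\Gamma^i_{jk}:=-g_{js}\,b^{si}_k$, skew-adjointness is precisely the compatibility $\nabla_k g_{ij}=0$.

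The substance of the theorem is the Jacobi identity. I would write the associated bracket in distributional form,
\[
\{u^i(x),u^j(y)\}=g^{ij}\,\delta'(x-y)+b^{ij}_k u^k_x\,\delta(x-y)+\tfrac12\,c\,u^i_x\,\mathrm{sgn}(x-y)\,u^j_y ,
\]
expand the cyclic sum $\sum_{\mathrm{cyc}}\{\{u^i(x),u^j(y)\},u^k(z)\}$, and collect terms according to how many factors of $\mathrm{sgn}$ (equivalently, of $\partial_x^{-1}$) they carry; this grading splits the computation into three decoupled blocks. The block with no $\mathrm{sgn}$ is purely local: requiring it to vanish reproduces, as in Theorem \ref{dnthm}, that $\Gamma$ is torsion-free --- so, combined with the previous step, $\Gamma$ is the Levi-Civita connection of $g$, i.e. $b^{ij}_k=-g^{is}\Gamma^j_{sk}$ with $\Gamma^j_{sk}$ the Christoffel symbols --- and leaves a residual term proportional to the curvature of $g$. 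The block linear in $\mathrm{sgn}$ is where the constant $c$ enters (derivatives landing on the kernel produce $\delta$'s via $\partial_x\mathrm{sgn}=2\delta$, which is precisely how the nonlocal tail couples back to the curvature term): its cancellation against that residual forces the Riemann tensor of $g$ to be $R^{ij}_{kl}=c\,(\delta^i_k\delta^j_l-\delta^i_l\delta^j_k)$, that is, $g$ has constant curvature $c$. The block quadratic in $\mathrm{sgn}$ vanishes identically by the $i\leftrightarrow j$ symmetry of $u^i_x u^j_x$. For the converse one simply runs this bookkeeping backwards: assuming $b=-g\Gamma$ and $g$ of constant curvature $c$, every graded component of the cyclic sum vanishes, so $\mathcal{A}$ is Hamiltonian.

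The main obstacle is this last step: the distributional expansion of the triple bracket is long, and the delicate point is identifying which tensorial relation sits in the coefficient of each distribution --- $\delta''(x-y)\delta(x-z)$, $\delta'(x-y)\delta(y-z)$, $\mathrm{sgn}(x-z)\delta(x-y)$, and so on --- and, in particular, correctly handling the derivatives that fall on the $\mathrm{sgn}$ kernel, since the $\delta$'s they produce feed back into the local block. Organising the calculation by powers of $\partial_x^{-1}$, and importing the $c=0$ skeleton verbatim from Theorem \ref{dnthm}, confines the genuinely new work to the $c$-linear contributions, which is manageable. An alternative, essentially equivalent, route is to present \eqref{fm} as the one-affinor specialisation --- with affinor equal to the identity --- of Ferapontov's general classification of nonlocal Hamiltonian operators of hydrodynamic type, whose admissibility conditions are Gauss--Codazzi equations encoding exactly metric compatibility, symmetry of the connection, and the curvature identity above.
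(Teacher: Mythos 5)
This is a statement the paper quotes from Mokhov--Ferapontov \cite{MF} without reproducing a proof, so there is no internal argument to compare against; judged on its own, your outline is essentially the standard derivation and is sound. Skew-adjointness of the nonlocal tail via $(\partial_x^{-1})^{*}=-\partial_x^{-1}$, the distributional expansion of the triple bracket with kernel $\tfrac12\,\mathrm{sgn}(x-y)$, and the grading by the number of $\mathrm{sgn}$ factors is exactly how the original computation (and Ferapontov's later general treatment leading to \eqref{cfop}) is organised; your alternative route --- specialising the general Ferapontov theorem to a single affinor proportional to the identity, so that the Gauss equation degenerates to \eqref{ccurv} and the Codazzi and commutativity conditions become trivial --- is also legitimate and is in fact what the paper gestures at when it remarks that $w^i_j=c\,\delta^i_j$ recovers the constant-curvature operator (modulo the sign subtlety for $c<0$, where one needs the version of \eqref{cfop} with signs $\epsilon_\alpha=\pm1$ in the tails, or simply a complex rescaling of $w$). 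One small bookkeeping correction: in the graded expansion the constant-curvature identity does not arise from a cancellation between the $\mathrm{sgn}$-linear block and a local residue; rather, after converting $\partial_x\,\mathrm{sgn}=2\delta$, the quadratic-in-$w$ contribution of the tail lands in the purely local ($\delta$-type) block, whose vanishing is the Gauss-type equation $R^{ij}_{kl}=c(\delta^i_k\delta^j_l-\delta^i_l\delta^j_k)$, while the block still carrying one $\mathrm{sgn}$ yields the symmetry and Codazzi-type conditions $g_{is}w^s_j=g_{js}w^s_i$, $\nabla_kw^i_j=\nabla_jw^i_k$, which are identically satisfied here, and the two-$\mathrm{sgn}$ block encodes commutativity of the affinors, trivial for a single tail. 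With that attribution fixed, the forward and converse directions go through as you describe.
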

We recall that the curvature tensor related to a semi-Riemannian metric $g_{ij}$ is
\begin{equation}\label{1o}
R^i_{jkl}=\Gamma^i_{jl,k}-\Gamma^i_{jk,l}+\Gamma^i_{ks}\Gamma^s_{jl}-\Gamma^i_{ls}\Gamma^s_{jk},
\end{equation}
where $\Gamma^{i}_{jk}$ are Christoffel symbols of the  Levi-Civita connection of $g$. So that we can re-write in coordinates the first point of Theorem \ref{fmthm} as
\begin{equation}\label{ccurv}R^i_{jkl}=c(\delta^i_kg_{jl}-\delta^i_lg_{jk}), \qquad i,j,k,l=1,2,\dots n.
\end{equation}
Notice that in the particular case of $c=0$ (i.e. the curvature tensor identically vanishes and the metric is flat) we recover the result of Theorem \ref{dnthm}.

Operators related to constant-curvature metrics can be viewed as particular cases of a further generalisation of Dubrovin-Novikov operators with a more complicated type of nonlocalities. This case has been investigated in full generality by Ferapontov in \cite{F} so that the following operator is also known as Ferapontov operator with $N$ nonlocal tails
\begin{equation}\label{cfop}
    g^{ij}(u)\partial_x+b^{ij}_k(u)\, u^k_x+ \displaystyle \sum_{\alpha=1}^N w^i_{\alpha k}(u)u^k_x\, \partial_x^{-1}\, w^j_{\alpha s}(u)\, u^s_x, \qquad i,j=1,2,\dots , n.
\end{equation}
The Hamiltonian property of \eqref{cfop} is described by the following Theorem:
\begin{theorem}[\cite{F}]
    A nonlocal first-order operator of form \eqref{cfop} is Hamiltonian if and only if setting $g_{ij}=(g^{kl})^{-1}$: 
    \begin{itemize}
        \item the pseudo-Riemannian metric $g_{ij}$ and the (1,1) tensor $w^i_j$ satisfy the relations
        \begin{subequations}\label{gpc}\begin{gather}\label{gpceqs}
            g_{is}w^s_{\alpha j}=g_{js}w^s_{\alpha i}, \quad \nabla_kw^i_{\alpha j}=\nabla_jw^i_{\alpha k}, \\  R^{ij}_{kl}=\displaystyle \sum_{\alpha=1}^N w^i_{\alpha k}w^j_{\alpha l} -w^j_{\alpha k}w^i_{\alpha l}, \qquad i,j,k,l=1,2,\dots n,
        \end{gather}
        \end{subequations}
        where $R^i_{jkl}=g_{js}R^{si}_{kl}$ is the Riemann curvature tensor of $g_{ij}$;
        \item the set of affinors $\{w^i_{\alpha j}\}_{\alpha=1}^N$ is commutative, i.e $[w_\alpha,w_\beta]=0$; 
        \item $b^{ij}_k$ and the Christoffel symbols for $g$ are related by the formula\begin{equation}b^{ij}_k=-g^{is}\Gamma^j_{sk}, \qquad i,j,k=1,2,\dots n.\end{equation}
    \end{itemize}
\end{theorem}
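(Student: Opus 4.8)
The plan is to check directly the two defining properties of a Hamiltonian operator — skew-adjointness and the vanishing of the Schouten (Jacobi) bracket — and to show that together they are equivalent to the three listed bullet points. It is cleanest to pass to the associated Poisson bracket in kernel form,
\begin{equation}
\{u^i(x),u^j(y)\}=g^{ij}\,\delta'(x-y)+b^{ij}_k\,u^k_x\,\delta(x-y)+\sum_{\alpha=1}^{N}w^i_{\alpha k}\,u^k_x\;\nu(x-y)\;w^j_{\alpha s}\,u^s_y ,
\end{equation}
where $\nu(x-y)=\tfrac12\,\mathrm{sgn}(x-y)$ is the kernel of $\partial_x^{-1}$, so $\partial_x\nu=\delta$ and, off the diagonal, the cyclic identity $\nu(x-y)\nu(y-z)+\nu(y-z)\nu(z-x)+\nu(z-x)\nu(x-y)=-\tfrac14$ holds. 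This last identity is what makes the doubly nonlocal terms in the Jacobi identity tractable.

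First I would impose skew-symmetry $\{u^i(x),u^j(y)\}=-\{u^j(y),u^i(x)\}$. Because $\nu$ is odd and a multiplication operator is self-adjoint, each tail $w^i_{\alpha k}u^k_x\,\partial_x^{-1}\,w^j_{\alpha s}u^s_x$ is automatically skew-symmetrized against its $(i\leftrightarrow j)$ partner, so the tails contribute no constraint at this stage; the local part reproduces the classical conditions that $g^{ij}$ be symmetric and $\partial_k g^{ij}=b^{ij}_k+b^{ji}_k$. Setting $\Gamma^j_{sk}:=-g_{si}b^{ij}_k$, the latter says exactly that $\Gamma$ is a $g$-compatible connection; that it is moreover symmetric (hence the Levi--Civita connection, the third bullet) will drop out of the Jacobi identity below.

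The core of the argument is the Jacobi identity. Substituting the bracket into $\{\{u^i(x),u^j(y)\},u^k(z)\}+\text{cyclic}=0$ and sorting the resulting distributions by type, one gets: (a) purely local terms built from $\delta$, $\delta'$, $\delta''$ supported on $x=y=z$; (b) terms carrying a single factor $\nu$; (c) doubly nonlocal terms carrying a product $\nu\cdot\nu$. Requiring (b) to vanish forces the symmetry $g_{is}w^s_{\alpha j}=g_{js}w^s_{\alpha i}$ (each affinor is $g$-self-adjoint) together with the Codazzi-type relation $\nabla_k w^i_{\alpha j}=\nabla_j w^i_{\alpha k}$. In (c) one uses the $\mathrm{sgn}$ cyclic identity to split the $\nu\nu$-part into a genuinely local piece (which then combines with (a)) and a piece proportional to the commutators $[w_\alpha,w_\beta]$; vanishing therefore requires the affinors to commute. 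Feeding both back, the remaining local terms (a) collapse to the torsion-free condition on $\Gamma$ and to the single Gauss-type equation $R^{ij}_{kl}=\sum_{\alpha=1}^{N}\big(w^i_{\alpha k}w^j_{\alpha l}-w^j_{\alpha k}w^i_{\alpha l}\big)$. Reading the equivalences in both directions yields the stated ``if and only if''. As consistency checks: with $N=1$, $w^i_{1j}=\sqrt{c}\,\delta^i_j$ the Gauss equation becomes $R^{ij}_{kl}=c(\delta^i_k\delta^j_l-\delta^j_k\delta^i_l)$, i.e.\ constant curvature, recovering Theorem \ref{fmthm}, and $c=0$ recovers Theorem \ref{dnthm}. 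Geometrically, conditions \eqref{gpc} together with commutativity are the Gauss--Codazzi--Ricci equations of an isometric immersion $M^n\hookrightarrow\mathbb{E}^{n+N}$ into a pseudo-Euclidean space with flat normal bundle, the $w_\alpha$ being the Weingarten operators of a parallel normal frame; a Bonnet-type theorem supplies existence of such an immersion.

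I expect the bookkeeping of the Jacobi identity to be the main obstacle, specifically the doubly nonlocal block (c): there one must combine the $\mathrm{sgn}$ cyclic identity with the distributional identities that arise when two of $x,y,z$ collide (terms of $\nu(x-y)\delta(y-z)$ type), and it is precisely here that both the commutativity condition and the exact form of the curvature constraint are generated. Keeping each monomial in $\delta,\delta',\nu$ together with its tensorial coefficient, and systematically matching it against its two cyclic images, is what makes the calculation manageable.
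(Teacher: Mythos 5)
This theorem is quoted in the paper from Ferapontov's work \cite{F} and no proof is given there, so there is nothing internal to compare against; your sketch follows the standard route of the cited source (direct verification of skew-adjointness and the Jacobi identity for the nonlocal bracket in kernel form, with the $\mathrm{sgn}$-cyclic identity handling the doubly nonlocal block, and the Gauss--Codazzi--Ricci interpretation for flat normal bundle). The outline correctly attributes each condition to the right block of the computation; the only caveat is that the heavy distributional bookkeeping is announced rather than carried out, which is exactly the content of the original reference.
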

{\begin{remark}[A geometric interpretation of operators \eqref{cfop}]
     Ferapontov described a beautiful geometric interpretation of nonlocal operators \eqref{cfop} in the differential-geometric context. In \cite{F}, equations \eqref{gpc} are interpreted as the Gauss-Peterson-Codazzi equations for submanifolds of dimension $n$ embedded in the pseudo-Euclidean space $\mathbb{R}^{n+N}$ with flat normal connection. In particular, the operators $w^i_{\alpha j}$ are regarded as the shape operators of the submanifold $M$ (also known as the Weingarten operators) corresponding to the field of pairwise orthogonal unit normals $\bar{n}_\alpha$ and the metric $g_{ij}$ as the first fundamental form. Notice that the family of shape operators is commutative by definition of submanifold with flat normal connection. 
     
     In the simple case $N=1$, we recall that if we indicate with $\mathbb{I}$ the second fundamental form of the hypersurface $M$, the shape operator satisfies
     \begin{equation}
         w^i_j=g^{is}\mathbb{I}_{sj}, \qquad i,j=1,2,\dots n,
     \end{equation}
     so that given a Hamiltonian operator in form \eqref{cfop} we uniquely determine a hypersurface in $\mathbb{R}^{n+1}$ in terms of its two fundamental forms.

     We stress that for flat metrics $g_{ij}$ (where $R^i_{jkl}=0$), the hypersurface reduces to a hyperplane and the Weingarten operator vanishes identically. So that, this case covers the geometric interpretation of Dubrovin-Novikov operators, which indeed are local. \end{remark}}

The compatibility conditions found by Tsarev have been recently computed for nonlocal operators by Vitolo and the present author in \cite{VerVit1}:
\begin{theorem}[\cite{VerVit1}]\label{thm_vervit2}
   Let us consider a non-local first order Hamiltonian operator of form \eqref{cfop} with $N=1$, whose
nonlocal part is defined by a hydrodynamic type symmetry $\varphi^i=w^i_j(u)u^j_x$,
and the hydrodynamic type system \eqref{hts}. Then, if the system is Hamiltonian with a Ferapontov operator the following conditions must be satisfied:
  \begin{subequations}\begin{align}
        &g^{is}V^j_s=g^{js}V^i_s, \qquad i,j=1,2,\dots n,\label{cond111}\\
        &\nabla^iV^j_k=\nabla^jV^i_k, \qquad i,j,k=1,2,\dots n.\label{cond211}
    \end{align}
    \end{subequations}
\end{theorem}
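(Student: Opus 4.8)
The plan is to carry over Tsarev's argument for Dubrovin--Novikov operators, keeping track of the single nonlocal tail. Since the system \eqref{hts} and the operator \eqref{cfop} with $N=1$ are homogeneous of weight one, any Hamiltonian functional realising \eqref{hts} with a Ferapontov operator may be taken of hydrodynamic type, $H=\int h(\mathbf u)\,dx$, so that $\delta H/\delta u^j=\partial_j h=:h_j$. Expanding $u^i_t=\mathcal A^{ij}h_j$, the leading term gives $g^{ij}\partial_x h_j=g^{ij}h_{jk}u^k_x$ and the tail contributes $w^i_k u^k_x\,\partial_x^{-1}\!\left((w^j_s h_j)\,u^s_x\right)$. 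If the tail $w^i_k u^k_x$ vanishes identically the operator is Dubrovin--Novikov, the Gauss relation in \eqref{gpceqs} forces $g$ flat, and the statement reduces to Tsarev's Theorem~\ref{tsathm}; otherwise, locality of the left-hand side forces the one-form $w^j_s h_j$ to be closed, hence (locally) $w^j_s h_j=\partial_s\Psi$ for some $\Psi(\mathbf u)$, and $\partial_x^{-1}\!\left((w^j_sh_j)u^s_x\right)=\Psi$. Using $b^{ij}_k=-g^{is}\Gamma^j_{sk}$ and the definition of the Levi-Civita Hessian, the system matrix takes the form
\begin{equation}\label{Vform}
V^i_k=\nabla^i\nabla_k h+\Psi\,w^i_k .
\end{equation}

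Condition \eqref{cond111} follows at once from \eqref{Vform}: lowering an index, $g_{ia}V^i_k=\nabla_a\nabla_k h+\Psi\,w_{ak}$, and both the Hessian $\nabla_a\nabla_k h$ and $w_{ak}:=g_{as}w^s_k$ are symmetric in $a\leftrightarrow k$ (the latter by the symmetry relations in \eqref{gpceqs}); equivalently $g^{is}V^j_s$ is symmetric in $i,j$. Only skew-adjointness of the operator (symmetry of $g$) and the constraint on $b^{ij}_k$ enter here, not the curvature conditions.

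For \eqref{cond211} I would antisymmetrise $\nabla^i V^j_k-\nabla^j V^i_k$ using \eqref{Vform}, obtaining three contributions: the commutator of covariant derivatives acting on the Hessian, the term $(\nabla^i\Psi)w^j_k-(\nabla^j\Psi)w^i_k$, and $\Psi(\nabla^i w^j_k-\nabla^j w^i_k)$. The last vanishes because $\nabla_a w_{bk}$ is symmetric in $a\leftrightarrow b$: the second symmetry relation in \eqref{gpceqs} allows one to trade the differentiation index for the lowered index of $w$, and symmetry of $w_{bk}$ then closes the argument. For the first, $\nabla^i\nabla^j\nabla_k h-\nabla^j\nabla^i\nabla_k h=g^{ia}g^{jb}[\nabla_a,\nabla_b]\nabla_k h=-g^{ia}g^{jb}R^s_{kab}h_s$ by the Ricci identity attached to the convention \eqref{1o}; substituting the Gauss-type relation for $R^{ij}_{kl}$ in \eqref{gpceqs} and using $\nabla^i\Psi=g^{is}\nabla_s\Psi=g^{is}w^j_s h_j$, this collapses to $-\left((\nabla^i\Psi)w^j_k-(\nabla^j\Psi)w^i_k\right)$, which exactly cancels the second contribution. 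Hence $\nabla^i V^j_k=\nabla^j V^i_k$, i.e.\ \eqref{cond211}.

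The one genuinely delicate point is this last cancellation: it requires mutually consistent sign and index conventions for the curvature tensor \eqref{1o}, the Ricci identity it induces on one-forms, the raising rule relating $R^i_{jkl}$ to $R^{ij}_{kl}$, and the Gauss equation in \eqref{gpceqs} — a careless choice produces a spurious factor of $2$ instead of a cancellation, so this is where I would concentrate the care. Everything else is routine tensor calculus, the argument being structurally Tsarev's with the flat identity $V=\nabla\nabla h$ replaced by \eqref{Vform} and the vanishing of $[\nabla,\nabla]$ replaced by its expression through the Weingarten operator $w$. It is also worth noting that the Jacobi identity enters only through the pointwise relations among $g$, $w$ and $b$ recorded in \eqref{gpc}; in particular the final identities \eqref{cond111}--\eqref{cond211} retain no trace of the curvature of $g$, in agreement with the remark opening this section that Tsarev-type conditions do not detect it.
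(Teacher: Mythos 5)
Your argument is essentially correct, but note that the paper does not prove Theorem \ref{thm_vervit2} at all: it is imported from \cite{VerVit1}, where the conditions are derived via the theory of coverings, i.e.\ by realising the nonlocal tail through an auxiliary potential variable and imposing compatibility (invariance of the Poisson structure along the flow), rather than by positing a hydrodynamic Hamiltonian density as you do. Your route --- observing that locality of the flow forces $w^j_s h_j$ to be exact, $w^j_s h_j=\partial_s\Psi$, so that $V^i_k=\nabla^i\nabla_k h+\Psi\,w^i_k$, and then rerunning Tsarev's computation --- is a legitimate and more elementary verification of necessity, at the price of the (standard here, but worth stating explicitly) assumption that the Hamiltonian functional can be taken of hydrodynamic type; the covering argument of \cite{VerVit1} never needs to exhibit $h$. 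Two smaller remarks. The additive constant hidden in $\partial_x^{-1}$ shifts $V^i_k$ by a constant multiple of $w^i_k$, which itself satisfies \eqref{cond111}--\eqref{cond211} by the first two relations in \eqref{gpceqs} and the total symmetry of $\nabla_a w_{bk}$ you establish, so the ambiguity is harmless --- but say so. More importantly, your caution about conventions is exactly on target and in fact bites on the paper's own notation: combining \eqref{1o}, \eqref{ccurv}, the raising rule $R^i_{jkl}=g_{js}R^{si}_{kl}$ as literally written, and the Gauss relation in \eqref{gpceqs} is not mutually consistent and produces precisely your spurious factor of $2$; with the consistent choice $R^{ij}_{kl}=g^{js}R^i_{skl}$ (raising the second lower index) the curvature contribution evaluates to $-\left((\nabla^i\Psi)\,w^j_k-(\nabla^j\Psi)\,w^i_k\right)$ and the cancellation you describe does occur, as is confirmed by the constant-curvature specialisation $w^i_j=\sqrt{c}\,\delta^i_j$, $\Psi=\sqrt{c}\,h$, which reproduces the operator \eqref{fm} and Tsarev's conditions.
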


This reveals that the necessary conditions for local \eqref{dn} or nonlocal \eqref{cfop} operators to be compatible with hydrodynamic-type systems do not change provided that the affinor $w^i_j$ is a symmetry for the system. We remark that the case $w^i_j=c\delta^i_j$ reduces to the constant-curvature operator and the additional requirement for $w^i_j$ to be a symmetry for the system is trivially satisfied by systems $u^i_t=V^i_j(u)u^j_x$ not explicitly depending on the independent variable $x$ (see also \cite[Section 4]{Ver1} and discussion therein). 

We finally briefly refer to a particular nonlocal Hamiltonian structure given by the following operator
\begin{equation}\label{confop}
    g^{ij}(u)\partial_x+b^{ij}_k(u)u^k_x+w^i_s(u)u^s_x\partial_x^{-1}u^j_x+u^i_x\partial_x^{-1}w^j_s(u)u^s_x, \qquad i,j=1,2,\dots, n.
\end{equation}

This operator has two nonlocal tails and is strictly related to conformally flat metrics $g_{ij}$, i.e. metrics such that $g_{ij}=\Omega(u)\, \eta_{ij}$ and $\eta_{ij}$ is a Euclidean metric. The Hamiltonianity conditions for \eqref{confop} have been computed again by Ferapontov in the very short paper \cite{Fer34}:
\begin{theorem}[\cite{Fer34}]
    A nonlocal first-order operator of form \eqref{confop} is Hamiltonian if and only if 
    setting $g_{ij}=(g^{kl})^{-1}$: 
    \begin{itemize}
        \item the pseudo-Riemannian metric $g_{ij}$ and the (1,1) tensor $w^i_j$ satisfy the relations
        \begin{subequations}\label{gpcr}\begin{gather}
            %\begin{split}
            g_{is}w^s_{ j}=g_{js}w^s_{ i}, \quad \nabla_kw^i_{ j}=\nabla_jw^i_{ k}, \\  R^{ij}_{kl}=w^i_k\delta^j_l+w^j_l\delta^i_k-w^j_k\delta^i_l-w^i_l\delta^j_k., \qquad i,j,k,l=1,2,\dots n,%\end{split}
        \end{gather}
        \end{subequations}
        where $R^i_{jkl}=g_{js}R^{si}_{kl}$ is the Riemann curvature tensor of $g_{ij}$;
        \item $b^{ij}_k$ and the Christoffel symbols for $g$ are related by the formula\begin{equation}b^{ij}_k=-g^{is}\Gamma^j_{sk}, \qquad i,j,k=1,2,\dots n.\end{equation}
    \end{itemize}
\end{theorem}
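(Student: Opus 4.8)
The plan is to verify for the operator \eqref{confop} the two defining properties of a Hamiltonian operator — skew-adjointness $\mathcal A^*=-\mathcal A$, and the vanishing of the Schouten bracket $[\mathcal A,\mathcal A]=0$ (equivalently, the Jacobi identity for $\{\cdot,\cdot\}_{\mathcal A}$) — and to extract from each the corresponding geometric conditions. Skew-adjointness is the easy half: writing $\mathcal A^{ij}$ for \eqref{confop} and $\mathcal T^{ij}=w^i_su^s_x\,\partial_x^{-1}\,u^j_x$, $\mathcal S^{ij}=u^i_x\,\partial_x^{-1}\,w^j_su^s_x$ for its two nonlocal tails, the elementary identity $\int A\,\partial_x^{-1}B\,dx=-\int(\partial_x^{-1}A)\,B\,dx$ shows that the formal adjoint interchanges $\mathcal T$ and $\mathcal S$, so that $\mathcal T+\mathcal S$ is skew-adjoint with no constraint at all. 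Hence $\mathcal A^*=-\mathcal A$ reduces to the classical local conditions $g^{ij}=g^{ji}$ and $b^{ij}_k+b^{ji}_k=\partial_kg^{ij}$, and under $\det g\neq0$ the second of these is equivalent to $b^{ij}_k=-g^{is}\Gamma^j_{sk}$ with $\Gamma$ the Levi-Civita connection of $g$ — the last item of the statement (automatically compatible with $\nabla_kg^{ij}=0$).

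The Jacobi identity is the substantive part, and rather than expand $[\mathcal A,\mathcal A]$ directly I would reduce \eqref{confop} to an operator of Ferapontov type \eqref{cfop}. Its nonlocal part can be written as $\sum_{\alpha,\beta=1}^{2}M_{\alpha\beta}\,\varphi^i_\alpha\,\partial_x^{-1}\,\varphi^j_\beta$ with $\varphi^i_1=w^i_ju^j_x$, $\varphi^i_2=u^i_x$ and constant symmetric coupling $M=\left(\begin{smallmatrix}0&1\\1&0\end{smallmatrix}\right)$. Diagonalising $M=O^{T}\mathrm{diag}(1,-1)\,O$ and absorbing $O$ into the flows produces $\psi^i_\pm=\tfrac1{\sqrt2}(\delta^i_j\pm w^i_j)\,u^j_x$, so the nonlocal part becomes $\psi^i_+\partial_x^{-1}\psi^j_+-\psi^i_-\partial_x^{-1}\psi^j_-$; that is, \eqref{confop} is precisely an operator \eqref{cfop} with $N=2$ nonlocal tails of opposite sign $(+,-)$, governed by the affinors $(\tilde w_1)^i_j=\tfrac1{\sqrt2}(\delta^i_j+w^i_j)$ and $(\tilde w_2)^i_j=\tfrac1{\sqrt2}(\delta^i_j-w^i_j)$. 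Applying the Gauss--Peterson--Codazzi criterion for \eqref{cfop} (in its indefinite-normal version, to accommodate the minus sign) then yields: $g_{is}(\tilde w_\alpha)^s_j$ symmetric in $i,j$; $\nabla_k(\tilde w_\alpha)^i_j=\nabla_j(\tilde w_\alpha)^i_k$; $[\tilde w_1,\tilde w_2]=0$; $b^{ij}_k=-g^{is}\Gamma^j_{sk}$; and the Gauss equation $R^{ij}_{kl}=\big[(\tilde w_1)^i_k(\tilde w_1)^j_l-(\tilde w_1)^j_k(\tilde w_1)^i_l\big]-\big[(\tilde w_2)^i_k(\tilde w_2)^j_l-(\tilde w_2)^j_k(\tilde w_2)^i_l\big]$.

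It then remains to rewrite these in terms of $w$, which is routine because $w\mapsto(\tilde w_1,\tilde w_2)$ is affine and $\nabla\delta^i_j=0$: symmetry of $g_{is}(\tilde w_\alpha)^s_j$ and the Codazzi equation for $\tilde w_\alpha$ are each equivalent to $g_{is}w^s_j=g_{js}w^s_i$ and $\nabla_kw^i_j=\nabla_jw^i_k$ respectively; $[\tilde w_1,\tilde w_2]=0$ is automatic since both are polynomials in $w$; and the Gauss equation collapses, by the polarization identity $\tfrac12\big[(\delta+w)\wedge(\delta+w)-(\delta-w)\wedge(\delta-w)\big]^{ij}_{kl}=w^i_k\delta^j_l+w^j_l\delta^i_k-w^j_k\delta^i_l-w^i_l\delta^j_k$ — with $(A\wedge B)^{ij}_{kl}:=A^i_kB^j_l-A^j_kB^i_l$ — to exactly the curvature relation in \eqref{gpcr}. (As a consistency check on the name: contracting this relation on $j=l$ forces $w^i_j$ to be the Schouten tensor of $g$, whereupon the relation says the Weyl tensor vanishes, i.e. $g_{ij}$ is conformally flat.)

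The step I expect to be the main obstacle is securing the one external input in the form needed: the classification theorem for \eqref{cfop} is quoted in the excerpt only for a positive-definite normal bundle, whereas the reduction above produces a timelike normal direction, so one must invoke its (standard, entirely parallel) indefinite-signature counterpart. Related care is needed in the manipulation of $\partial_x^{-1}$ and of the Schouten bracket of nonlocal bivectors: this should be carried out in the Mokhov--Ferapontov framework, where the nonlocal tails must be generated by genuine hydrodynamic symmetries — here $u^i_x$ is the $x$-translation flow and $w^i_ju^j_x$ is a flow precisely by virtue of the $g$-symmetry, Codazzi and $b$--$\Gamma$ relations, so the reduction is legitimate. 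Alternatively one can forgo the reduction and compute $[\mathcal A,\mathcal A]=0$ head-on; the only genuinely new ingredient relative to \eqref{cfop} is the cross-term coupling the two mixed tails $w\,\partial_x^{-1}\,\delta$ and $\delta\,\partial_x^{-1}\,w$, and its vanishing reproduces the same curvature relation.
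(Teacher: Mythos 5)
The paper does not actually prove this statement: it is imported verbatim from \cite{Fer34}, so there is no internal proof to compare against. Judged on its own merits, your route is sound and is essentially the standard way this criterion is understood: rewriting the two mixed tails of \eqref{confop} as a coupling matrix $M=\left(\begin{smallmatrix}0&1\\1&0\end{smallmatrix}\right)$ on the flows $w^i_ju^j_x$ and $u^i_x$, diagonalising to $\mathrm{diag}(1,-1)$, and recognising \eqref{confop} as a Ferapontov operator \eqref{cfop} with $N=2$ tails attached to the affinors $\tfrac1{\sqrt2}(\delta\pm w)$ but with opposite signs, i.e.\ a codimension-two submanifold with one timelike normal. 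Your polarization computation correctly collapses the signed Gauss equation to the curvature relation in \eqref{gpcr}, the $g$-symmetry and Codazzi conditions for $\delta\pm w$ are indeed equivalent to those for $w$ since $\nabla\delta=0$, and the commutativity of the family is automatic; the constant invertible change of tails also makes the translation of conditions bijective, so both directions of the equivalence survive the reduction. You correctly identify the one genuine external input: the criterion for \eqref{cfop} as quoted in the paper (Theorem after \eqref{cfop}, conditions \eqref{gpc}) is stated with a positive-definite quadratic form on the tails, and you must invoke Ferapontov's pseudo-Euclidean version, in which the Gauss equation carries signs $e_\alpha=\pm1$; this is available in \cite{F} but is not contained in the statement reproduced here, so your proof is complete only modulo that citation (or a direct verification of $[\mathcal{A},\mathcal{A}]=0$ for the indefinite case, which is the route \cite{Fer34} effectively takes when tying the operator to conformally flat metrics).

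One small correction: skew-adjointness of the local part yields only $g^{ij}=g^{ji}$ and $b^{ij}_k+b^{ji}_k=\partial_k g^{ij}$, i.e.\ it fixes the symmetric part of $b$ in the upper indices; it is not by itself equivalent to $b^{ij}_k=-g^{is}\Gamma^j_{sk}$ with $\Gamma$ the Levi-Civita connection, because the torsion-free property of the induced connection comes from the Jacobi identity, exactly as in the Dubrovin--Novikov case. This does not damage your argument, since the criterion for \eqref{cfop} that you quote after the reduction already contains the relation $b^{ij}_k=-g^{is}\Gamma^j_{sk}$ among its consequences, but the sentence claiming the equivalence at the level of skew-adjointness alone should be removed or rephrased.
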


As remarked in \cite{Fer34}, from a purely differential geometric viewpoint the operator \eqref{confop} is Hamiltonian if and only if $g_{ij}$ is a conformally flat metric.

\vspace{3mm}

{At this point, starting from Tsarev's compatibility relations between a first-order operator and a system of hydrodynamic type, we might require one of the following additional conditions
\begin{itemize}
    \item[(a)] $g$ is a flat metric, 
    \item[(b)] $g$ is a constant-curvature metric, or
    \item[(c)] $g$ is a conformally flat metric. 
\end{itemize}

These requirements can also be further generalised, see \cite{F}.

Condition (a) was the one discussed in \cite{VerFer2}, the main result in this case is presented in Theorem \ref{main} of the previous Section. In this paper, we address our investigation to cases (b) and (c).}

\subsection{Hamiltonian structures with constant-curvature metric}

We now wonder if another Hamiltonian structure with constant curvature emerges for systems of type \eqref{J2}. This turns out to be the case. In particular, we are able to extend conditions \eqref{eqdr} and \eqref{eqref2} involving an arbitrary constant $c$ related to the curvature of the proposed operator. {We remark that the following statement first appeared in \cite{VerFer2} without a proof. In this paper, we show the proof in details and present some further results as direct consequence of the Theorem. The Corollaries here obtained cover a large class of interaction kernels, i.e. the separable ones.}

The following result holds true:

\begin{theorem}\label{main2} The following conditions are necessary for the metric $g$ to {have} constant curvature $c$:
\begin{subequations}\begin{align}
&\epsilon^{ij}\left(\chi_i+\chi_j\right)-2c=2\left(s_i\epsilon^{ij}_{,i}+s_j\epsilon^{ij}_{,j}+\displaystyle \sum_{k\neq i,j}\epsilon^{ik}\epsilon^{jk}\varphi_k\right)\label{eqdr1}
\\&
\displaystyle \sum_{k\neq i} \varphi_{k}(\epsilon^{ik})^2+\psi_i=-c\label{eqref21}
\end{align}\end{subequations}
where 
$g_i(r^i,\eta^i)=\varphi_i(\eta^i)(r^i)^2+\chi_i(\eta^i)r^i+\psi_i(\eta^i)
$.\end{theorem}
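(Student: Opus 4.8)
The plan is to mimic the structure of the proof of Theorem~\ref{main} from \cite{VerFer2} but with the flatness requirement $R^i_{jkl}=0$ replaced by the constant-curvature condition \eqref{ccurv}, namely $R^i_{jkl}=c(\delta^i_k g_{jl}-\delta^i_l g_{jk})$. First I would fix the block-diagonal metric $g^{ij}$ with Hankel blocks $J_i=\begin{pmatrix} m_i & n_i \\ n_i & 0\end{pmatrix}$ and entries $n_i,m_i$ as in \eqref{nm}, and compute the Levi-Civita connection. Because $g$ is block-diagonal in the pairs $(r^i,\eta^i)$, many Christoffel symbols vanish and the nonzero ones are the same as those already computed in the flat case; the only quantities entering the curvature are $n_i$, $m_i$ and their first derivatives in $r^i$ and $\eta^i$, together with the dependence of $u^i$ (hence $\partial u^i/\partial r^j$) on all the $r$'s through $\beta=-\hat\epsilon^{-1}$.

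Next I would write down the curvature tensor components $R^i_{jkl}$ via \eqref{1o} and impose \eqref{ccurv} component by component. The key observation — already used for $c=0$ — is that the vanishing (or, here, the prescribed constant-curvature value) of the "mixed" components forces $g_i(r^i,\eta^i)$ to be at most quadratic in $r^i$, so I would substitute $g_i=\varphi_i(r^i)^2+\chi_i r^i+\psi_i$ from the outset (this structural fact is inherited from Theorem~\ref{main} and does not change when $c\neq0$, since the terms producing it are the highest-order-in-$r^i$ contributions, unaffected by the lower-order constant-curvature right-hand side). With this ansatz in place, the remaining independent curvature equations split into: (i) the "diagonal-block" equations, which with the quadratic ansatz reduce to an algebraic identity yielding \eqref{eqref21}, the constant $-c$ appearing precisely where $0$ appeared in \eqref{eqref2}; and (ii) the "off-diagonal" equations mixing indices $i\neq j$, which after using the formulas \eqref{uvp} for $u^i,v^i$ and the symmetry $\epsilon^{ij}=\epsilon^{ji}$ collapse to \eqref{eqdr1}, with the extra $-2c$ term coming from the $c\,\delta^i_k g_{jl}$ part of \eqref{ccurv} evaluated on the relevant index combination.

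The main obstacle I expect is bookkeeping: carefully tracking how $c$ enters each of the curvature components and verifying that the only surviving modifications of the flat-case identities are the additive constants $-2c$ in \eqref{eqdr1} and $-c$ in \eqref{eqref21}, with no further constraints appearing. In particular I would need to check that the components of $\delta^i_k g_{jl}-\delta^i_l g_{jk}$ that are not manifestly of the "$-c$ times a metric entry" shape (for instance those involving $m_i$, which itself contains the arbitrary $\varphi_i,\chi_i,\psi_i$) are automatically consistent — i.e. that the $m_i$-dependence cancels between the curvature side and the constant-curvature side exactly as the $g_i$-dependence did in the flat case, leaving only the $n_i$-proportional pieces to fix the constants. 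Since $n_i=s_i/(u^i)^2$ is nowhere zero on the relevant domain, dividing through by it is legitimate and produces the stated normalisation of the constant. Once these cancellations are confirmed, equations \eqref{eqdr1}–\eqref{eqref21} follow, and the proof of necessity is complete.
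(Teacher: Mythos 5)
Your overall strategy --- impose the constant-curvature condition \eqref{ccurv} directly on the metric determined by \eqref{nm} and read off the resulting constraints --- is the same as the paper's, but your proposal has a gap exactly at the step where the two conditions are supposed to appear. In the paper both \eqref{eqdr1} and \eqref{eqref21} are extracted from a \emph{single} curvature component, $R^{r^i}_{r^ir^i\eta^i}=-c\,g_{r^i\eta^i}$, made explicit through the cofactors $A_{i,k}$ of the matrix $\hat{\epsilon}$; since these cofactors are polynomial in the variables $r^j$, $j\neq i$, the condition becomes a polynomial identity in the $r$'s, and \eqref{eqdr1}, \eqref{eqref21} arise as the coefficients of $(r^j)\prod_{k\neq i,j}(r^k)^2$ and of $\prod_{k\neq i}(r^k)^2$ respectively. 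Your proposed splitting --- ``diagonal-block'' curvature equations yielding \eqref{eqref21} and ``off-diagonal'' components mixing $i\neq j$ yielding \eqref{eqdr1} --- does not reflect this mechanism: the coupling between the indices $i$ and $j$ in \eqref{eqdr1} enters through the dependence of $u^i$ and $m_i$ (hence of the single-block curvature component) on \emph{all} the $r$'s and $\eta$'s via $\hat{\epsilon}^{-1}$, not through curvature components straddling two Jordan blocks. Without identifying this polynomial structure in the $r$-variables and performing the coefficient comparison, the plan never actually produces the two displayed equations, and it is not clear that the block-by-block bookkeeping you describe would isolate them.

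A smaller point: you take the quadraticity of $g_i$ in $r^i$ as ``inherited'' from Theorem \ref{main}. It does survive, but for $c\neq 0$ it must be re-derived, since the new right-hand side $-c\,g_{r^i\eta^i}$, proportional to $c\,(u^i)^2/s_i$, could a priori depend on $r^i$. It does not, because after clearing denominators one has $c\,(u^i)^2(\det\hat{\epsilon})^2=c\bigl(\sum_{k}(-1)^{i+k+1}A_{i,k}\bigr)^2$ and each cofactor $A_{i,k}$ omits the $i$-th row of $\hat{\epsilon}$, the only place where $r^i$ occurs; only then does the $r^i$-independence of the whole equation force the quadratic ansatz for $g_i$. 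Your ``lower order in $r^i$'' justification gestures at this but does not verify it. Since the theorem asserts only necessity, your route could in principle be completed, but as written the decisive computation is missing.
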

\begin{proof}
The proof of the Theorem follows from considering the component $R^{r^i}_{r^ir^i\eta^i}$ of the Riemann curvature tensor. Applying the definition of constant-curvature Riemann tensor \eqref{ccurv}, the following condition must be satisfied
\begin{equation}\label{33}
R^{r^i}_{r^ir^i\eta^i}=-cg_{r^i\eta^i},
\end{equation} 
for an arbitrary constant $c$. 
Here, the left-hand side is
\begin{equation}\label{18}
R^{r^i}_{r^ir^i\eta^i}=\Gamma^{r^i}_{r^i\eta^i,r^i}-\Gamma^{r^i}_{r^ir^i,\eta^i}+\Gamma^{r^i}_{r^is}\Gamma^s_{r^i\eta^i}-\Gamma^{r^i}_{\eta^is}\Gamma^s_{r^ir^i}.
\end{equation}
One can now substitute the expression of the Christoffel symbols (see  \cite[Theorem 1]{VerFer2}) to make \eqref{18} explicit, as 

\begin{equation}
    - \frac{(\det \epsilon)^2g_{i,r^ir^i}-2(\det\epsilon )A_{i,i}g_{i,r^i}+\displaystyle 2\sum_{k=1}^ng_k(A_{i,k})^2{+}2\sum_{k,l=1}^nA_{i,k}A_{i,l}\left(s_l\epsilon^{lk}_{,l}+s_k\epsilon^{lk}_{,k}\right)}{2\, s_i\, (\det \epsilon)^2}
\end{equation}
where $A_{i, k}$ is the cofactor of the $n\times n$ matrix $\hat \epsilon$,  i.e.,  the determinant of the minor obtained by eliminating the $i$-th row and the $k$-th column of $\hat \epsilon$.

Moreover, using \eqref{nm} we obtain that the right-hand side of the previous equation is \begin{equation}
\frac{c(u^i)^2}{s_i(\eta^i)},
\end{equation}
and recalling that $u^i=\sum_k\beta_{ik}$, where $\beta_{ik}=(-1)^{i+k+1}\frac{A_{i,k}}{\det \epsilon}$. In particular,  
$$c(u^i)^2 (\det \epsilon)^2=c\displaystyle \left(\sum_{k=1}^n(-1)^{i+k+1}A_{i,k}\right)^2$$
Finally, one can easily see as \cite[Theorem 1]{VerFer2} that \eqref{33} does not depend on $r^i$.  Then,  $g_i$ is quadratic in $r^i$.

To conclude, conditions \eqref{eqdr1}, \eqref{eqref21} follow considering respectively the coefficients of $(r^j)\prod_{k\neq i,j} (r^k)^2$ and $\prod_{k\neq i}(r^k)^2$.
\end{proof}

\vspace{3mm}

In addition, we easily obtain the following Corollaries:
\begin{corollary}By the previous conditions it follows that two cases arise:
\begin{itemize}
\item[A)] The kernel $\epsilon$ is separable;
\item[B)] $\varphi_i=0,\psi_i=-c$ and equation \eqref{eqdr1} is satisfied;
\end{itemize}\end{corollary}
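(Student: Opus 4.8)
The plan is to start from the two scalar identities \eqref{eqdr1} and \eqref{eqref21} established in Theorem \ref{main2} and extract structural information about the kernel $\epsilon^{ij}$. The natural first step is to look at \eqref{eqref21}, namely $\sum_{k\neq i}\varphi_k(\epsilon^{ik})^2+\psi_i=-c$ for every $i$. Since $\psi_i=\psi_i(\eta^i)$ depends only on $\eta^i$ while the sum involves the off-diagonal entries $\epsilon^{ik}=G(\eta^i,\eta^k)$, differentiating this relation with respect to a variable $\eta^j$ with $j\neq i$ isolates the term $k=j$ and yields $\varphi_j(\eta^j)\,\partial_{\eta^j}\big((\epsilon^{ij})^2\big)=0$. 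This immediately produces the dichotomy: either $\varphi_j\equiv 0$ for all $j$ (and then \eqref{eqref21} forces $\psi_i=-c$, and \eqref{eqdr1} is whatever remains — this is case B), or there exists some index with $\varphi_j\not\equiv 0$, in which case on the open set where $\varphi_j\neq 0$ we get $\partial_{\eta^j}\epsilon^{ij}=0$ for all $i\neq j$, i.e. $\epsilon^{ij}$ does not depend on $\eta^j$.

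The second step is to turn the partial independence statements into genuine separability. If $\varphi_j\not\equiv 0$ for a single index $j$, then $\epsilon^{ij}$ is independent of $\eta^j$ for all $i$; by the symmetry $\epsilon^{ij}=\epsilon^{ji}=G(\eta^i,\eta^j)$ this means $G(\eta^i,\eta^j)$ depends on $\eta^j$ only through... wait, it must be reconciled with the symmetry. The cleaner route is: combine the information from \eqref{eqref21} with \eqref{eqdr1}. Rewrite \eqref{eqdr1} as
\begin{equation}
\epsilon^{ij}(\chi_i+\chi_j)-2c-2\sum_{k\neq i,j}\epsilon^{ik}\epsilon^{jk}\varphi_k=2\big(s_i\,\epsilon^{ij}_{,i}+s_j\,\epsilon^{ij}_{,j}\big).
\end{equation}
The left-hand side, modulo the kernel entries themselves, is built from functions of the single variables $\eta^i$ and $\eta^j$ together with sums over third indices. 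Treating $\log\epsilon^{ij}=:\lambda(\eta^i,\eta^j)$, the condition that $\epsilon$ be separable is exactly $\partial_{\eta^i}\partial_{\eta^j}\lambda=0$. I expect that applying $\partial_{\eta^i}$ and then $\partial_{\eta^j}$ to \eqref{eqdr1}, using that the $\varphi_k,\psi_k,\chi_k,s_k$ terms all depend on a single variable, and feeding back the relation \eqref{eqref21} and its derivatives, will collapse to a mixed second derivative of $\log\epsilon^{ij}$ vanishing — hence separability — precisely when we are not in the degenerate branch $\varphi_k\equiv 0$. So case A is "$\epsilon$ separable" and case B is the complementary "$\varphi_i=0$, $\psi_i=-c$, and \eqref{eqdr1} holds as the sole surviving constraint."

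The step I expect to be the main obstacle is the bookkeeping in the second step: equation \eqref{eqdr1} couples the entry $\epsilon^{ij}$ to the \emph{other} entries $\epsilon^{ik},\epsilon^{jk}$ through the sum $\sum_{k\neq i,j}\epsilon^{ik}\epsilon^{jk}\varphi_k$, so a naive differentiation does not immediately decouple things — one has to use \eqref{eqref21} (which controls $\sum_k\varphi_k(\epsilon^{ik})^2$) to tame that sum, and one must be careful about whether the dichotomy is "global in all indices" or can hold index-by-index (the honest statement is presumably that a single nonvanishing $\varphi_j$ already forces separability of every $\epsilon^{i\cdot}$, via the symmetry of $\epsilon$ and the fact that the flatness/curvature equations are symmetric in $i,j$). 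I would organize the argument as: (i) derive the pointwise dichotomy from \eqref{eqref21}; (ii) in branch B record $\psi_i=-c$ and that \eqref{eqdr1} is the only remaining equation; (iii) in the complementary branch, show $\epsilon^{ij}$ factorizes as a product of a function of $\eta^i$ and a function of $\eta^j$, i.e. $\epsilon$ is separable, which is case A; (iv) note these exhaust the possibilities, completing the proof. $\blacksquare$
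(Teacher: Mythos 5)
There is a genuine gap, and it sits in your very first differentiation. If you apply $\partial_{\eta^j}$ (with $j\neq i$) to \eqref{eqref21}, the $k=j$ term gives, by the product rule,
\begin{equation*}
\varphi_j'(\eta^j)\,(\epsilon^{ij})^2+\varphi_j(\eta^j)\,\partial_{\eta^j}\bigl((\epsilon^{ij})^2\bigr)=0,
\end{equation*}
not $\varphi_j\,\partial_{\eta^j}\bigl((\epsilon^{ij})^2\bigr)=0$: you dropped the $\varphi_j'$ term. Consequently the "immediate dichotomy" you draw — either $\varphi_j\equiv 0$ or $\partial_{\eta^j}\epsilon^{ij}=0$ — does not follow; and note that even if it did, $\partial_{\eta^j}\epsilon^{ij}=0$ is not separability (a multiplicatively separable kernel $\phi_i(\eta^i)\phi_j(\eta^j)$ does depend on $\eta^j$), which is the symmetry clash you yourself flag mid-proof. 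The correct manipulation, which is the one the paper uses, is designed precisely to eliminate the $\varphi_j'$ term you lost: after differentiating in $\eta^k$, divide by $(\epsilon^{ik})^2$ so the relation reads $\varphi_k'+2\varphi_k\,\partial_{\eta^k}\log\epsilon^{ik}=0$ (up to the paper's bookkeeping with a second $\eta^k$-derivative), and then differentiate in $\eta^i$; the $\varphi_k'$ contribution disappears and one is left with $\varphi_k\,\partial^2_{\eta^i\eta^k}\log\epsilon^{ik}=0$. This is exactly the clean dichotomy: either $\varphi_k=0$ for all $k$ (whence $\psi_i=-c$ from \eqref{eqref21}, with \eqref{eqdr1} as the surviving constraint — case B), or the mixed log-derivative vanishes, i.e. $\epsilon^{ik}=\phi_i(\eta^i)\phi_k(\eta^k)$ — case A.

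Your proposed repair via \eqref{eqdr1} does not rescue the argument either: it is stated only as an expectation ("I expect that applying $\partial_{\eta^i}$ and then $\partial_{\eta^j}$ \dots will collapse"), with no computation, and it targets the wrong equation. The separability dichotomy comes from \eqref{eqref21} alone; \eqref{eqdr1} cannot be expected to force $\partial^2_{\eta^i\eta^j}\log\epsilon^{ij}=0$ in the non-degenerate branch, since (as the KdV example in Section \ref{sub33} shows) \eqref{eqdr1} is satisfied with $c\neq 0$ for a non-separable kernel in the branch $\varphi_i=0$. So the argument as written neither establishes the dichotomy nor converts it into multiplicative separability; the missing ingredient is the division by $(\epsilon^{ik})^2$ before the second differentiation, which is the whole point of the paper's proof.
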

\begin{proof}
    Let us consider equation \eqref{eqref21}:
\begin{equation}
    \displaystyle \sum_{k\neq i} \varphi_{k}(\epsilon^{ik})^2+\psi_i=c,
\end{equation}
recalling that $\varphi_i=\varphi(\eta^i)$ and $\psi_j=\psi_j(\eta^j)$. Differentiating twice with respect to $\eta^k$ (with $k\neq  i$), dividing the result by $(\epsilon^{ik})^2$ and then differentiating again by $\eta^i$, we have 
\begin{equation}
    \varphi_k\frac{\partial^2 (\log \epsilon^{ik})}{\partial \eta^i\partial \eta^k}=0.
\end{equation}

We then obtain two cases
\begin{equation}
    \varphi_k=0, \qquad \text{or} \qquad \frac{\partial^2 (\log \epsilon^{ik})}{\partial \eta^i\partial \eta^k}=0.
\end{equation}
The first equation simply leads to $\psi_k=-c$, jointly with condition \eqref{eqdr1}. Then, case B) is proved. 

Solving the second one, we obtain 
\begin{equation}
    \frac{\partial^2 \epsilon^{ik}}{\partial \eta^i\partial \eta^k}=\frac{1}{\epsilon^{ik}}\frac{\partial \epsilon^{ik}}{\partial \eta^i}\, \frac{\partial \epsilon^{ik}}{\partial \eta^k},
\end{equation} that is the interaction kernel $\epsilon(\eta^i,\eta^k)$ is multiplicatively separable:
\begin{equation}
    \epsilon(\eta^i,\eta^k)=\phi_i(\eta^i)\phi_k(\eta^k),
\end{equation}
that is, case A) is also proved.
\end{proof}

\vspace{3mm}

\begin{corollary}
    In the separable case, we additionally have that \begin{equation}
        \epsilon^{ij}(\eta^i,\eta^j)=\phi_i(\eta^i), \qquad \text{or}\qquad \epsilon^{ij}(\eta^i,\eta^j)=\phi_j(\eta^j)
    \end{equation}
\end{corollary}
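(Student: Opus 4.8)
The plan is to push the separability analysis one step further, using the full set of constraints rather than only equation \eqref{eqref21}. In the previous Corollary we already learned that in case A) the kernel factorises as $\epsilon^{ij}(\eta^i,\eta^j)=\phi_i(\eta^i)\phi_j(\eta^j)$ for some functions $\phi_i$; but $\epsilon$ is symmetric by hypothesis, so a priori this only forces $\phi_i=\phi_j$ on overlapping domains, which is already a strong restriction. The claim of the final Corollary, however, is stronger: it says $\epsilon^{ij}$ actually depends on \emph{only one} of its two arguments. So the plan is to feed the factorised form back into equation \eqref{eqref21} (and, if needed, \eqref{eqdr1}) and extract the extra rigidity.

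First I would substitute $\epsilon^{ik}=\phi_i(\eta^i)\phi_k(\eta^k)$ into \eqref{eqref21}, which becomes
\begin{equation}
\phi_i(\eta^i)^2\sum_{k\neq i}\varphi_k(\eta^k)\,\phi_k(\eta^k)^2+\psi_i(\eta^i)=-c.
\end{equation}
Denote $C_i:=\sum_{k\neq i}\varphi_k\phi_k^2$, which is a constant once we recognise that differentiating the displayed identity with respect to any $\eta^k$, $k\neq i$, kills $\psi_i$ and $c$ and leaves $\phi_i^2\,\partial_{\eta^k}(\varphi_k\phi_k^2)=0$; hence either $\phi_i\equiv 0$ (a degenerate case that collapses the corresponding block and can be excluded, or handled separately) or $\varphi_k\phi_k^2$ is a constant for each $k$. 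Writing $a_k:=\varphi_k\phi_k^2=\text{const}$, the identity reduces to $\phi_i^2\sum_{k\neq i}a_k=-c-\psi_i$, i.e. $\psi_i=-c-\big(\sum_{k\neq i}a_k\big)\phi_i^2$. This already expresses $\psi_i$ in terms of $\phi_i$ and is consistent; the point is that the $a_k$ are \emph{global} constants shared by all equations.

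Next I would bring in equation \eqref{eqdr1}. After substituting the factorised kernel and using $s_i=s_i(\eta^i)$, the cross terms $s_i\epsilon^{ij}_{,i}+s_j\epsilon^{ij}_{,j}$ become $s_i\phi_i'\phi_j+s_j\phi_i\phi_j'$, and the sum $\sum_{k\neq i,j}\epsilon^{ik}\epsilon^{jk}\varphi_k=\phi_i\phi_j\sum_{k\neq i,j}\varphi_k\phi_k^2=\phi_i\phi_j\big(\sum_{k\neq i,j}a_k\big)$, again involving only the constants $a_k$. So \eqref{eqdr1} reads
\begin{equation}
\phi_i\phi_j(\chi_i+\chi_j)-2c=2\Big(s_i\phi_i'\phi_j+s_j\phi_i\phi_j'+\phi_i\phi_j\sum_{k\neq i,j}a_k\Big).
\end{equation}
Dividing through by $\phi_i\phi_j$ (on the locus where both are nonzero) separates the equation into a function of $\eta^i$ alone plus a function of $\eta^j$ alone being equal to a constant plus $-2c/(\phi_i\phi_j)$; the remaining coupled term $c/(\phi_i\phi_j)$ is the obstruction. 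For this to be a genuine separation we are forced to have $c=0$ \emph{unless} one of the $\phi$'s is constant. A constant $\phi_i$ means $\epsilon^{ij}=\phi_i(\eta^i)\phi_j(\eta^j)$ no longer depends on $\eta^j$ — which, after absorbing the constant, is exactly the statement $\epsilon^{ij}=\phi_i(\eta^i)$, and symmetrically $\epsilon^{ij}=\phi_j(\eta^j)$.

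The main obstacle I anticipate is the careful bookkeeping of the degenerate sub-cases and the symmetry of $\epsilon$. Because $\epsilon^{ij}=\epsilon^{ji}$, the factorisation $\phi_i(\eta^i)\phi_j(\eta^j)$ must be handled so that the ``one-variable'' conclusion is stated correctly: the honest statement is that the common factor function is the \emph{same} $\phi$ for every index (so $\epsilon^{ij}=\phi(\eta^i)\phi(\eta^j)$ with a universal $\phi$), and the claim $\epsilon^{ij}=\phi_i$ or $\epsilon^{ij}=\phi_j$ should be read up to the constant normalisation forced by the separation of \eqref{eqdr1}. I would therefore devote a short paragraph to: (i) dispatching the blocks where $\phi_i\equiv 0$; (ii) noting $\varphi_k\phi_k^2=a_k=\text{const}$; (iii) observing that the coupled $c/(\phi_i\phi_j)$ term in \eqref{eqdr1} cannot be split unless $\phi$ is (piecewise) constant; and (iv) concluding that in the constant-curvature ($c\neq 0$) separable regime the kernel degenerates to a function of a single spectral parameter, which is the assertion.
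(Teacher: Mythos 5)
Your argument is essentially the paper's: substitute $\epsilon^{ij}=\phi_i(\eta^i)\phi_j(\eta^j)$ into \eqref{eqdr1}, divide by $\phi_i\phi_j$, and observe that the only term genuinely coupling $\eta^i$ and $\eta^j$ is $-2c/(\phi_i\phi_j)$ (the paper makes this precise by applying $\partial^2/\partial\eta^i\partial\eta^j$, which leaves $-2c\,\phi_i'\phi_j'/(\phi_i^2\phi_j^2)=0$), so for $c\neq 0$ one of the factors must be constant; your preliminary step via \eqref{eqref21} showing $\varphi_k\phi_k^2$ is constant is harmless but unnecessary, since that term is annihilated by the mixed derivative in any case. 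Only note the small index slip at the end: a constant $\phi_i$ makes $\epsilon^{ij}$ independent of $\eta^i$, i.e.\ $\epsilon^{ij}=\phi_j(\eta^j)$, which by symmetry does not affect the either/or conclusion.
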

\begin{proof}
    Let us suppose $\epsilon^{ij}(\eta^i,\eta^j)=\phi_i(\eta^i)\phi_j(\eta^j)$, then equation \eqref{eqdr1} reduces to 
    \begin{equation}
        \phi_i\, \phi_j \left(\chi_i+\chi_j\right) -2c = 2s_j\,  \phi_i\, \phi_j'+ 2s_i\, \phi_i'\, \phi_j.
    \end{equation}
    Let us now divide the previous expression by $\phi_i\, \phi_j$ and obtain the following
    \begin{equation}\label{questaqui}
        \chi_i+\chi_j-\frac{2c}{\phi_i\phi_j} = 2s_j\, \frac{\phi_j'}{\phi_j}+2s_i\, \frac{\phi_i'}{\phi_i}.
    \end{equation}
    Now by recalling that $s_\ell \frac{\phi_\ell'}{\phi_\ell}$ only depends on $\eta^\ell$, then applying$\dfrac{\partial^2}{\partial \eta^i\partial \eta^j}$ we reduce the equation \eqref{questaqui} into 
    \begin{equation}
        -2c\frac{\phi_i'\, \phi_j'}{\phi_i^2\phi_j^2}=0.
    \end{equation}
    So that the Corollary is proved. 
\end{proof}

\subsection{Conformally flat metrics and Ferapontov Hamiltonian structures}\label{subconf}
We conclude this Section with some further remarks on general Ferapontov structures \eqref{cfop} and structures related to conformally flat metrics \eqref{confop} for systems in Jordan block form. Let us firstly stress that for $N=1$ the necessary condition of compatibility for a system to be Hamiltonian with Ferapontov operators are given by Theorem \ref{thm_vervit2}. In particular, we need the additional requirement that the affinor $w^i_j(u)$ defines a hydrodynamic-type symmetry $\varphi^i=w^i_j(u)u^j_x$ for the system, i.e. they must commute. An {\color{blue}analogous} result holds true for operators \eqref{confop}.

In \cite{FP}, the authors proved that commuting flows for polychromatic reductions of the El's kinetic equations are given by affinors in the same Jordan block form, where the blocks are in Toeplitz form: 
\begin{equation}
w=\begin{pmatrix}
    A_1&0&\cdots &0\\
    0&A_2& \cdots &0\\
    0&0&\ddots&0\\
    0&0&\dots& A_n
\end{pmatrix}, \quad \text{ } \qquad A_i=\begin{pmatrix}
        w^i&q^i\\0&w^i
    \end{pmatrix}, \qquad i=1,2,\dots n.    
\end{equation}
and  
\begin{align}\label{1a}
w^i&=\frac{1}{u^i}\beta_{ki}\varphi^i,\\
q^i&=\frac{1}{u^i}\left(\epsilon^{ki}_{,\eta^i}(w^k-w^i)-\mu^ir^i+\varphi^i_{\eta^i}\right),\label{2a}
\end{align}
where $\varphi^i(\eta^1,\eta^2)$ and $\mu^i(\eta^i)$, $i=1,2,\dots,  n$ are arbitrary functions of the indicated arguments with the additional relation $\partial_i\varphi^j=\epsilon^{ij}\mu^i$.

As an example, for $n=2$ the affinor must be
\begin{equation}
w=\begin{pmatrix}
w^1&p^1&&\\&w^1&&\\&&w^2&p^2\\&&&w^2
\end{pmatrix}
\end{equation}
where explicitly (see \cite[formula (42)]{FP}):
\begin{subequations}
\begin{align}
    &w^1=\frac{r^2\varphi^1-\epsilon \varphi^2}{r^2-\epsilon},\\
    &w^2=\frac{r^1\varphi^2-\epsilon \varphi^1}{r^1-\epsilon},\\
    &q^1=\frac{r^1r^2-\epsilon^2}{r^2-\epsilon}\left(\frac{\varphi^2-\varphi^1}{r^2-\epsilon}\epsilon_{,\eta^1}+r^1\mu^1-\varphi^1_{,\eta^1}\right),\\
    &q^2=\frac{r^1r^2-\epsilon^2}{r^1-\epsilon}\left(\frac{\varphi^1-\varphi^2}{r^1-\epsilon}\epsilon_{,\eta^2}+r^2\mu^2-\varphi^2_{,\eta^2}\right).
\end{align}
\end{subequations}
Here $\varphi^i(\eta^1,\eta^2)$, $i=1,2$, are arbitrary functions satisfying $\varphi^1_{,\eta^2}=\epsilon\, \mu^2$ and $\varphi^2_{,\eta^1}=\epsilon\, \mu^1$, whereas $\mu^i(\eta^i)$, $i=1,2$ are totally arbitrary functions of their arguments.

We finally stress that deriving the analogue of Theorems \ref{main} and \ref{main2} is a harder task. Indeed, in this case the Hamiltonianity conditions deal with a different type of curvature tensor, i.e. a tensor of type (2,2). This fact implies the need to {\color{blue} raise} one index in the Riemann tensor, producing a more complicated expression to be solved. However, for concrete computations we skip this problem by solving the additional requirement that the affinor $w^i_j$ in the nonlocal tail of the operators must be a symmetry of the investigated system.

In the following Section, we show some examples coming form the theory of soliton gas and generalised hydrodynamics.

%They  mainly belong to case B), with except for the hard rod case, in which $\epsilon^{ij}=-a$, i.e. $\phi_i,\phi_k$ are both constants.  

\section{Examples}\label{sub33}
\begin{remark}
    Before proceeding with some examples, we briefly discuss the structure of the Hamiltonian density $H=\int{h(u)\, dx}$ for the reduced systems. We firstly remark that, as in the finite dimensional case, the Hamiltonian functional is a conserved quantity for the related Hamiltonian system. The structure of conservation laws for polychromatic reductions of the kinetic equation was established in \cite{FP}. In our case, the Hamiltonian density $h$ is given by the following formula
\begin{equation}\label{hh} h({r},\eta)=\displaystyle \sum_{i=1}^nu^i\, h_i(\eta^i)
\end{equation}
where $u^i$ are defined in \eqref{uvp} and $h_i(\eta^i)$ must be specified case by case when fixing the interaction kernel $G(\mu,\eta)$ and the effective velocities $v$.
\end{remark}
The El's system \eqref{gas} can be derived as a thermodynamic limit of the Whitham equations for a large class of integrable equations. El himself proved in \cite{El} that the Korteweg-de Vries equation is associated with \eqref{gas} for the specific choice of the interaction kernel $G(\mu,\eta)$ and the free-velocity $S(\eta)$ as follows
\begin{equation}
    G(\mu,\eta)=\frac{1}{\mu\eta}\log{\left|\frac{\mu-\eta}{\mu+\eta}\right|}, \qquad S(\eta)=4\eta^2.
\end{equation}

In the following table we list other cases, for each one we show the interaction kernel and the free-velocity functions.

\vspace{3mm}

\begin{center}
 \footnotesize{Table 1. Types of soliton gas equations}
 
 \vspace{2mm}
\begin{tabular}{|l|l|l|}
 \hline
kinetic equation & $S(\eta)$  & $G(\mu, \eta)$ \\
 \hline
KdV soliton gas\vphantom{$\frac{\frac{A}A}{\frac{A}A}$} &
$4\eta^2$ & % parametrized by
$\frac{1}{\eta \mu} \log \bigg|{ \frac{\eta-\mu}{\eta+\mu}}\bigg|$\\
 \hline
sinh-Gordon soliton gas\vphantom{$\frac{\frac{A}A}{\frac{A}A}$} &
$\tanh \eta$ &  % parametrized by
$\frac{1}{\cosh \eta \cosh  \mu}\  \frac{a^2\cosh(\eta-\mu)}{4\sinh^2(\eta-\mu)}$\\
\hline
hard-rod gas\vphantom{$\frac{\frac{A}A}{\frac{A}A}$} &
$\eta$ &  % parametrized by
$-a$\\
\hline
Lieb-Liniger gas\vphantom{$\frac{\frac{A}A}{\frac{A}A}$} &
$\eta$ &  % parametrized by
$\frac{2a}{a^2+(\eta -\mu)^2}$\\
\hline
DNLS soliton gas\vphantom{$\frac{\frac{A}A}{\frac{A}A}$} &
$\eta$ &  % parametrized by
$\frac{1}{2\sqrt{\eta^2-1}\sqrt{\mu^2-1}}\log  \left( \frac{(\eta-\mu)^2-\left(\sqrt{\eta^2-1}+\sqrt{\mu^2-1}\right)^2}{(\eta-\mu)^2-\left(\sqrt{\eta^2-1}-\sqrt{\mu^2-1}\right)^2}\right)$  \\
\hline
separable case\vphantom{$\frac{\frac{A}A}{\frac{A}A}$} &
arbitrary &  % parametrized by
$\phi(\eta)+\phi(\mu)$\\
\hline
general case\vphantom{$\frac{\frac{A}A}{\frac{A}A}$} &
arbitrary &  % parametrized by
$\phi(\mu)\phi(\eta)g[a(\mu)-a(\eta)]$\\
\hline
\end{tabular}

\end{center}

\vspace{3mm}

The existence of a local Hamiltonian structure for the previous cases was proved in \cite{VerFer2}. Here we list their results by making explicit the functions $s_i$ and $\chi_i$ in Theorem \ref{main} {and the Hamiltonian density $h_i$ as in \eqref{hh}:}

 \begin{center}
 \footnotesize{Table 2: Local Hamiltonian structures for $n\geq 3$}
 
 \vspace{4mm}
 
\begin{tabular}{|l|l|l|l|}
 \hline
kinetic equation & $s_i(\eta^i)$  & $\chi_i(\eta^i)$ &$h_i(\eta^i)$\\
 \hline
KdV soliton gas\vphantom{$\frac{\frac{A}A}{\frac{A}A}$} &
$\eta^i$ & % parametrized by
$-2$ & $-\frac{4}{3}(\eta^i)^2$\\

 \hline
sinh-Gordon soliton gas\vphantom{$\frac{\frac{A}A}{\frac{A}A}$} &
$1$ & % parametrized by
$-2\tanh \eta^i$ & $-1$\\

\hline

Lieb-Liniger gas\vphantom{$\frac{\frac{A}A}{\frac{A}A}$} &
$1$ & % parametrized by
$0$ & $-\frac{1}{2}(\eta^i)^2$\\

\hline

DNLS soliton gas\vphantom{$\frac{\frac{A}A}{\frac{A}A}$} &
$1-(\eta^i)^2$ & % parametrized by
$2\eta^i$ & $1$\\

\hline
separable case\vphantom{$\frac{\frac{A}A}{\frac{A}A}$} 
& $\frac{\phi(\eta^i)}{\phi'(\eta^i)} $ &  % parametrized by
$1$ & $-\sqrt{\phi(\eta^i)}\int^{\eta^i}\frac{\phi'(\eta)S(\eta)}{\phi(\eta)^{3/2}}\ d\eta $\\

\hline

general case\vphantom{$\frac{\frac{A}A}{\frac{A}A}$} &
$\frac{1}{a'(\eta^i)}$ & % parametrized by
$\frac{2\phi'(\eta^i)}{a'(\eta^i)\phi(\eta^i)}$ & $-\phi(\eta^i)\int^{\eta^i}{\frac{S(\eta)a'(\eta)}{\phi(\eta)}\, d\eta}$\\

 \hline

\end{tabular}
 \end{center}

 \vspace{3mm}

 \paragraph{Hamiltonian structures with constant-curvature metrics}

 We now investigate the existence of a nonlocal structure with the Hamiltonian operator defined in \eqref{fm}. We show two examples of this type and finally a case not admitting such structure.

\begin{example}[KdV equation - I]\label{kdvexa}
Let us firstly consider the $n=2$ case, i.e. 
\begin{equation}
\epsilon(\eta^1,\eta^2)=\frac{1}{\eta^1\eta^2}\ln{\left(\frac{\eta^1-\eta^2}{\eta^1+\eta^2}\right)}.
\end{equation} 
To search for a nonlocal Hamiltonian structure of type \eqref{fm} one can proceed in two ways:
\begin{itemize}
    \item[1.] solve condition \eqref{ccurv} in the unknown functions $s_1,s_2$, $\psi_1,\psi_2$ and $\chi_1,\chi_2$, or
    \item[2.] solve \eqref{eqdr1} in the same unknown functions but double checking that the solutions give arise to a constant-curvature metric (indeed, we recall that condition \eqref{eqdr1} is only necessary and not sufficient in general).
\end{itemize}

We obtain the following
\begin{subequations}
\begin{gather}
s_1(\eta^1)=-\frac{(c_1+c_2)\eta^1}{4}-\frac{c}{2}(\eta^1)^3,\quad s_2(\eta^2)=-\frac{(c_1+c_2)\eta^2}{4}-\frac{c}{2}(\eta^2)^3,\\\psi_1(\eta^1)=-c,\quad \psi_2(\eta^2)=-c, \\  \chi_1(\eta^1)=c_1+c(\eta^1)^2,\quad \chi_2(\eta^2)=c_2+c(\eta^2)^2
\end{gather}
\end{subequations}

We remark that in the 2-dimensional reduction we obtain a multi-Hamiltonian structure, with three Hamiltonian operators one for each arbitrary constant $c_1,c_2$ and $c$. Two of them are local and one is nonlocal. 

Finally, we generalise the previous result for arbitrary $n$, having 
\begin{subequations}
\begin{gather}
s_i(\eta^i)=\frac{c}{2}(\eta^i)^3-\frac{\tilde{c}}{2}\eta^i, \\ \psi_i(\eta^i)=c, \quad \chi_i(\eta^i)= \tilde{c}-c(\eta^1)^2.
\end{gather}
\end{subequations}
This shows that for higher order reductions only one local structure survives (with respect to $c_1=c_2=-\tilde{c}$) but the bi-Hamiltonian structure is preserved by the nonlocal one.

Finally the Hamiltonian density is given by formula \eqref{hh}, where we specify
\begin{equation}
    h_i=-\frac{8}{\eta^i}\displaystyle \int^{\eta^i}{ \frac{\eta^2}{\tilde{c}-c\eta^2}\, d\eta}=\begin{cases}-\dfrac{8}{c}+\dfrac{8\sqrt{2}\tan^{-1}{\left(\sqrt{\frac{c}{\tilde{c}}}\eta^i\right)}}{\eta^ic\sqrt{c\cdot \tilde{c}}}\qquad \tilde{c},c\neq 0\\[1.5ex]
    -\dfrac{4}{3\tilde{c}}(\eta^i)^2\qquad \qquad \qquad \qquad c=0\\[2ex]
    -\dfrac{8}{c}\qquad \qquad \qquad \qquad \quad \quad \,\, \,  \tilde{c}=0
    \end{cases}
\end{equation}
\end{example}

Analogously, we consider a second example. 
\begin{example}[Additive separable cases]
Let us consider an interaction kernel of the form
\begin{equation}
    \epsilon(\eta^i,\eta^j)=\phi_i(\eta^i)+\phi_j(\eta^j),
\end{equation}
also known as additive separable case.

Assuming $n=2$, we have again a multi-Hamiltonian structure:
\begin{subequations}
\begin{gather}
s_1(\eta^1)=\frac{c_3\phi_1^2+2c_1\phi_1+c_4-c}{2\phi_1'},\quad s_2(\eta^2)=\frac{-c_3\phi_2^2+2c_1\phi_2-c_4-c}{2\phi_2'},\\ 
\psi_1(\eta^1)=-c,\quad \psi_2(\eta^2)=-c, \\ \chi_1(\eta^1)=c_1+c_2+c_3\phi_1,\quad \chi_2(\eta^2)=c_1-c_2-c_3\phi_2
\end{gather}
\end{subequations}
and finally generalising for arbitrary $n$, we obtain that 
%\begin{subequations}
\begin{align}
s_i(\eta^i)=\frac{2\phi_i(\eta^i)\tilde{c}-c}{2\phi_i'(\eta^i)}, \qquad \psi_i(\eta^i)=-c, \qquad \chi_i(\eta^i)= \tilde{c}.
\end{align}
%\end{subequations}
The quantities $h_i$ in formula \eqref{hh} are
\begin{equation}
    h_i=2\sqrt{2\phi(\eta^i)+c}\displaystyle \int^{\eta^i}{\frac{\phi'(\eta)S(\eta)}{\left(2\phi(\eta)+c\right)^{3/2}} d\eta}
\end{equation}
so that also the Hamiltonian density is established.
\end{example}

We finally consider an example for which such a nonlocal structure does not exist.  

\begin{example}[Lieb-Liniger] 
Let us consider the interaction kernel 
\begin{equation}\label{lls}
    \epsilon(\eta^i,\eta^j)=\frac{2a}{a^2+(\eta^i-\eta^j)^2},
\end{equation}where $a$ is a constant. This phase shift is associated with the Lieb-Liniger equation. It is easy to check that by substituting \eqref{lls} equation \eqref{eqdr1} has no solution with arbitrary $c$ and the only one obtained requires $c=0$. As a result, the Hamiltonian structure reduces to a purely local one, i.e. to a Dubrovin-Novikov operator.  \end{example}

\vspace{3mm}

\paragraph{Hamiltonian structures related to conformally flat metrics} Finally, we consider operators of type \eqref{confop}. We recall that in this case the affinor $w^i_j$ is required to be a hydrodynamic-type symmetry for the reduced system, so that computing the Hamiltonianity conditions for operators of the form described in subsection \ref{subconf} we can compute two main examples. 

\begin{example}[KdV equation - II]
Fixing the interaction kernel of the KdV, solving the Hamiltonianity conditions of the operator and considering equations \eqref{1a} and \eqref{2a} with the additional relation $\partial_i\varphi^j=\epsilon^{ij}\mu^i$, we obtain the following
\begin{subequations}
\begin{align}
s_1(\eta^1)&=-\frac{1}{4}\left(2c_2(\eta^1)^4+2c_1(\eta^1)^2+2c_3(\eta^1)^2+c_4+c_5\right)\eta^1,\\
s_2(\eta^2)&=-\frac{1}{4}\left(2c_2(\eta^2)^4+2c_1(\eta^2)^2+2c_3(\eta^2)^2+c_4+c_5\right)\eta^2,\\
g_1(r^1,\eta^1)&=-2c_2(\eta^1)^2+\left(c_2(\eta^1)^4+(c_1+c_3)(\eta^1)^2+c_4\right)r^1-2c_3,\\
g_2(r^2,\eta^2)&=-2c_2(\eta^2)^2+\left(c_2(\eta^2)^4+(c_1+c_3)(\eta^2)^2+c_5\right)r^2-2c_3,
\end{align}\end{subequations}
whereas, the flow $w^i_j$ has the following {form}
\begin{subequations}
\begin{gather}
\varphi^1(\eta^1,\eta^2)=c_2(\eta^1)^2+c_3,\qquad 
\varphi^2(\eta^1,\eta^2)=c_2(\eta^2)^2+c_1,\\
\mu_1(\eta^1)=0,\qquad 
\mu_2(\eta^2)=0,
\end{gather}
\end{subequations}
Note that we reduce to the flat case when $w^i_j=0$. This is equivalent to require that $c_1=c_2=c_3=0$. In this case, we obtain the same solution obtained in \cite{VerFer1,VerFer2}. Furthermore, the constant curvature case of Example \ref{kdvexa} is derived by choosing $c_1=c_3$ and $c_2=0$.

\end{example}

\begin{example}[Lieb-Liniger equation - II]
Choosing the interaction kernel of the Lieb-Liniger model, we obtain:
\begin{subequations}
\begin{gather}
s_1(\eta^1)=c_3\qquad 
s_2(\eta^2)=c_3\\
g_1(r^1,\eta^1)=c_2r^1+2c_1\qquad 
g_2(r^2,\eta^2)=-(c_2r^2+2c_1)
\end{gather}\end{subequations}
whereas, the flow {$w^i_j$} has the following {form}
\begin{subequations}
\begin{gather}
\phi_1(\eta^1,\eta^2)=-c_1\qquad 
\phi_2(\eta^1,\eta^2)=c_1\\
\mu_1(\eta^1)=0\qquad 
\mu_2(\eta^2)=0
\end{gather}
\end{subequations}
Now we notice that  for $c_1=0$ the flat case is obtained. However, the constant curvature case is only obtained for $\phi_1(\eta^1,\eta^2)=\phi_2(\eta^1,\eta^2)$, i.e. for $c=c_3=0$. 
\end{example}

As expected, both the examples reveal that nonlocal operators allow a more general structure for the leading coefficient $g^{ij}(u)$. In particular, it is easy to check that the functions $g_i(r^i,\eta^i)$ are not only linear in $r^i$ for non-separable kernels. Moreover, the functions $s_i(\eta^i)$ in the KdV case and in the Lieb-Liniger one are polynomials of higher order compared to what obtained for flat structures. 

%\subsection{The hypersurface associated to the KdV equation}

\section{Conclusions}\label{conc}
{In this article, we presented in details additional Hamiltonian structures for the polychromatic reduction of the kinetic equation for soliton gas. We recall that for reductions in higher number of components ($n>2$, i.e. more than 4 components) the structure obtained turns out to play a key role in the integrability of the system. Even if similar results  firstly appeared in another previous paper by E.V. Ferapontov and the present author, here we give some further details: we present a rigorous proof for operators with constant-curvature metrics,  we discuss the conformally flat case and compute new examples. }

The existence of a second Hamiltonian structure which is nonlocal reveals that a deeper investigation of the inner geometric properties of such equations is still needed. As an example, we conjecture (and some preliminary results jointly with E.V. Ferapontov confirm this thesis) that a more general Hamiltonian formulation is possible involving other type of nonlocalities, such as the one given by fully general Ferapontov operators \eqref{cfop}
%\begin{equation}
%    g^{ij}(u)\partial_x+\Gamma^{ij}_k(u)u^k_x+W^i_s(u)u^s_x\partial_x^{-1}W^j_l(u)u^l_x.
%\end{equation}

In this direction, we expect to additionally investigate the geometric interpretation of the related manifolds in a purely differential geometric context. Finally, in the thermodynamic limit we plan to extend the Hamiltonian property with these nonlocalities to the full kinetic equation, as firstly done in \cite{VerFer2}.

Recently, new developments on degenerate operators \cite{Ver2,DellAVer} show that similar results can be obtained for Dubrovin-Novikov operators  without the assumption of non-degeneracy of the leading coefficient $g^{ij}$. We wonder if degenerate structures for the investigated equations exist or rather if such a generalisation might be helpful for other types of reductions (as the one recently obtained by T. Congy, M.A. Hoefer and G.A. El under the condensate ansatz) to prove their integrability.

\vspace{5mm}

%\funding{This research received no external funding.}

%\dataavailability{Data are contained within the article.} 

\noindent \textbf{Acknowledgments}

{The author is extremely thankful to E.V. Ferapontov for the possibility to work on this problem, for his comments and suggestions and for his concrete help. He acknowledges the financial support of GNFM of the Istituto Nazionale di Alta Matematica.  The author is partially funded by the research project Mathematical Methods in Non- Linear Physics (MMNLP) by the Commissione Scientifica Nazionale – Gruppo 4 – Fisica Teorica of the Istituto Nazionale di Fisica Nucleare (INFN) and by the project “An artificial intelligence approach for risk assessment and prevention of low back pain: towards precision spine care”, PNRR-MAD-2022-12376692, CUP: J43C22001510001 funded by the European Union - Next Generation EU - NRRP M6C2 - Investment 2.1 Enhancement and strengthening of biomedical research in the NHS.}

%\conflictsofinterest{The author declares no conflicts of interest.} 

%\reftitle{References}

\end{document}